\theoremstyle{plain}
\newtheorem{theorem}{Theorem}
\newtheorem{proposition}[theorem]{Proposition}
\newtheorem{lemma}[theorem]{Lemma}
\theoremstyle{definition}
\DeclareMathOperator{\lcm}{lcm}
\newcommand{\dfa}{\textrm{DFA}}
\newcommand{\ufa}{\textrm{UFA}}
\newcommand{\nfa}{\textrm{NFA}}
\newcommand{\xnfa}{\textrm{XNFA}}
\newcommand{\afa}{\textrm{AFA}}
\newcommand{\eoe}{\ifmmode$\hspace*{\fill}$\blacksquare\else\hspace*{\fill}$\blacksquare$\fi\smallskip}
\title{Complexity of Unary Exclusive\\ Nondeterministic Finite Automata}
\author{Martin Kutrib \and Andreas Malcher \and Matthias Wendlandt
\institute{%
  Institut f\"ur Informatik, Universit\"at Giessen\\
  Arndtstr.~2, 35392 Giessen, Germany}
\email{$\{$kutrib, andreas.malcher, matthias.wendlandt$\}$@informatik.uni-giessen.de}
}
\begin{document}

\maketitle

\begin{abstract}
Exclusive nondeterministic finite automata ($\xnfa$) are nondeterministic
finite automata with a special acceptance condition.
An input is accepted if there is exactly one accepting path in its
computation tree. If there are none or more than one accepting paths,
the input is rejected. We study the descriptional complexity 
of $\xnfa$ accepting unary languages. While the state costs for 
mutual simulations with $\dfa$ and $\nfa$ over general alphabets 
differ significantly from the known types of finite 
automata, it turns out that the state costs for the simulations
in the unary case are in the order of magnitude of the general case.
In particular, the state costs for the simulation of an 
$\xnfa$ by a $\dfa$ or an $\nfa$ are $e^{\Theta(\sqrt{n\cdot\ln n})}$.
Conversely, converting an $\nfa$ to an equivalent $\xnfa$ may
cost $e^{\Theta(\sqrt{n\cdot\ln n})}$ states as well. 
All bounds obtained are also tight in the order of magnitude. 
Finally, we investigate the computational complexity of different decision
problems for unary $\xnfa$s and it is shown that the problems of emptiness, universality, inclusion, 
and equivalence are {\sf coNP}-complete, whereas the general membership problem is 
{\sf NL}-complete.
%
\end{abstract}

\section{Introduction}

The ability of using nondeterminism for finite automata does not increase their computational power 
in comparison with the deterministic variant, but the simulation costs for a deterministic finite automaton ($\dfa$)
can be exponentially higher in terms of states than for an equivalent nondeterministic finite 
automaton ($\nfa$)~\cite{Meyer:1971:edagfs,moore:1971:bssspe}.

In the last decades several structural extensions of finite automata have been examined.
One such extension is, for example, to give the reading head of the finite automaton the power of two-way motion.
Such \emph{two-way} finite automata do also not increase the computational power of
finite automata~\cite{Rabin:1959:fadp}, but they are interesting from a descriptional complexity point of
view, since the costs for one-way deterministic finite automata for the simulation of two-way deterministic finite automata 
can be exponential in the number of states~\cite{moore:1971:bssspe}. Similar results can also be shown 
for the nondeterministic case \cite{Sakoda:1978:NST}.

A more fine-grained look on the range between nondeterministic and deterministic finite automata leads to the 
model of \emph{unambiguous} finite automata~\cite{schmidt:1978:sdcfrfl}. Here, nondeterminism is allowed, 
but for every accepted word there has to be exactly one accepting path. 
From a descriptional complexity perspective it is known that 
the trade-off from unambiguous finite automata to $\dfa$s is exponential as \mbox{well 
\cite{leung:1998:seafapafa,leung:2005:dcnfada,schmidt:1978:sdcfrfl}.}

In contrast to these structural extensions, another extension is examined in~\cite{kutrib:2023:coenfa:proc,kutrib:2023:coenfa} 
that is based on the acceptance conditions of the automata and which leads 
to \emph{exclusive} nondeterministic finite automata (\xnfa). 
In this model, the computation tree of an input is defined in the same way as for nondeterministic finite 
automata, but its interpretation is different. Namely, an input word~$w$ is accepted, if there is 
exactly one accepting path for~$w$. If there is no accepting path for~$w$ or two or more accepting paths for~$w$, 
then~$w$ is rejected. Clearly, any unambiguous finite automaton can be considered as an $\xnfa$, but in comparison 
to unambiguous finite automata, multiple accepting paths are allowed and lead to non-acceptance in an $\xnfa$.
In~\cite{kutrib:2023:coenfa:proc,kutrib:2023:coenfa} complexity aspects of $\xnfa$s have been investigated.
Concerning the descriptional complexity, it is shown that $n$-state $\xnfa$s can be determinized 
as well, but the upper bound turns out to be $3^n-2^n+1$ and is shown to be tight. Moreover, 
$n\cdot 2^{n-1}$ states are shown to be a tight bound for the simulation of an $\xnfa$ by an equivalent $\nfa$.
The simulation of an $\nfa$ by an equivalent $\xnfa$ leads to an upper bound of $2^n-1$ which is shown
to be tight as well. Concerning the computational complexity, it is shown that the problems of emptiness, 
universality, inclusion, and equivalence are {\sf PSPACE}-complete, whereas the general membership problem is 
{\sf NL}-complete. It should be noted that a computational model with exactly one accepting computation 
on every accepted input has already been known in the context of complexity theory as the class \textsf{US}
(unique solution). It is defined (see~\cite{Blass:1982:otusp}) 
as the class of languages~$L$ for which there 
exists a nondeterministic polynomial time Turing machine $M$ such that
$w \in L$ if and only if~$M$ has on input $w$ exactly one accepting computation path.
A short overview on the properties of the class \textsf{US} may be found in~\cite{Hemaspaandra:2002:tctc}.

In this paper, we investigate the descriptional and computational complexity of $\xnfa$s accepting \emph{unary} languages. 
The descriptional complexity of unary regular languages has extensively been studied in the literature.
A fundamental result was obtained by Chrobak in~\cite{Chrobak:1986:FAUL,Chrobak:2003:ERRFAUL}. 
He shows that $O(F(n))$ is a tight bound for the simulation of an $\nfa$ by an equivalent $\dfa$.
Here, $F(n)$ denotes Landau's function~\cite{Landau:1903:mpgg} that is the maximal order of the cyclic subgroups of the 
symmetric group on~$n$ elements and can be estimated as $F(n) \in e^{\Theta(\sqrt{n\cdot\ln n})}$.
Landau's function plays a crucial role in many results on the descriptional complexity of unary regular languages.
One line of research in the past years is that many automata models such as, for example, one-way finite automata, 
two-way finite automata, pushdown automata, and context-free grammars have been investigated and compared
to each other with respect to simulation results and the size costs of the simulation 
(see, for example,~\cite{geffert:2003:ctwnuaisa,Mereghetti:2001:osbua,okhotin:2012:ufau,pighizzini:2009:dpdaul:art,pighizzini:2002:ucfgpdadcaslb}). 
Another line of research in recent years concerns investigations on the state complexity of operations on unary languages
which can be found, for example, in~\mbox{\cite{holzer:2003:ulonsc:proc,kunc:2012:scotwfaua,mera:2005:cunfa,Pighizzini:2002:uloscjf}}. 

The paper is structured as follows. In Section~\ref{sect:prelim}, we give the basic definitions that
are used in the further sections. In Section~\ref{sect:det}, we study the descriptional costs for determinizing
a given unary $\xnfa$. As a fundamental preparatory step we show that any unary $n$-state $\xnfa$ can be converted
to an equivalent $O(n^3)$-state $\xnfa$ in Chrobak normal form. This result is in slight contrast
to $\nfa$s where the conversion of an arbitrary $\nfa$ to Chrobak normal form may induce only a quadratic blow-up
of the number of states. Based on the $\xnfa$ in Chrobak normal form we can construct an equivalent
$\dfa$ whose number of states is bounded by $e^{\Theta(\sqrt{n\cdot\ln n})}$. This upper bound is also
tight in the order of magnitude. In Section~\ref{sect:nondet}, we obtain similar upper and lower bounds for
the conversion of unary $\xnfa$s to equivalent $\nfa$s and of unary $\nfa$s to equivalent $\xnfa$s.
Finally, in Section~\ref{sect:compcomp} we study the computational complexity
of decidability questions. In particular, we consider general membership, emptiness, universality, inclusion,
and equivalence with respect to the unary case and show
that for unary $\xnfa$s the general membership problem 
is {\sf NL}-complete, whereas the questions of emptiness, finiteness,
inclusion, and 
equivalence are {\sf coNP}-complete.

\section{Definitions and Preliminaries}\label{sect:prelim}

Let $\Sigma^*$ denote the set of all words over the finite alphabet $\Sigma$.
The \emph{empty word} is denoted by $\lambda$, and
$\Sigma^+ = \Sigma^* \setminus \{\lambda\}$. 
The \emph{reversal} of a word $w$ is denoted by $w^R$.  For the \emph{length} of~$w$ we
write~$|w|$. 
We use $\subseteq$ for \emph{inclusions} and~$\subset$ for \emph{strict inclusions}.
We write~$2^{S}$ for the power set and~$|S|$ for the cardinality of a
set~$S$. 

A \emph{nondeterministic finite automaton} ($\nfa$) is a
system $M=\langle Q,\Sigma,\delta,q_0,F\rangle$, where 
$Q$ is the finite set of \emph{states},
$\Sigma$ is the finite set of \emph{input symbols},
$q_0 \in Q$ is the \emph{initial state},
$F\subseteq Q$ is the set of \emph{accepting states}, and
$\delta\colon Q \times\Sigma\to 2^{Q}$ is the \emph{transition function}.  

With an eye towards further modes of acceptance, 
we define the \emph{acceptance of an input} in terms of computation trees. 
For any input $w=a_1a_2\cdots a_n\in\Sigma^*$ read by some $\nfa$ $M$, a
\emph{(complete) path for $w$} is a sequence of states $q_0,q_1,\dots,q_{n}$ such that
$q_{i+1}\in\delta(q_i,a_{i+1})$, $0\leq i\leq n-1$.
All possible paths on $w$ are combined into a \emph{computation tree} of
$M$ on $w$. So, a computation tree of $M$ is a finite rooted tree
whose nodes are labeled with states of $M$. In particular, the root is labeled
with the initial state, and the successor nodes of a node labeled $q$ are
the nodes $p_1,p_2,\dots, p_m$ if and only if $\delta(q,a)=\{p_1,p_2,\dots,
p_m\}$, for the current input symbol $a$.
A path in the computation tree is an
\emph{accepting path} if it ends in an accepting state.

Now, an input $w$ is accepted by an $\nfa$ 
if at least one path in the computation tree of $w$ is accepting.

An $\nfa$, where for acceptance
it is required that \emph{exactly} one path is accepting,
is called an \emph{exclusive nondeterministic finite
automaton} ($\xnfa$).

The \emph{language accepted} by the $\xnfa$~$M$ is 
$L(M) = \{\,w\in \Sigma^*\mid w \text{ is accepted by } M\,\}$.

Finally, an $\nfa$ is a \emph{deterministic
finite automaton} ($\dfa$) if and only if $|\delta(q,a)|=1$, for all $q\in Q$ 
and $a\in\Sigma$. In this case we
simply write $\delta(q,a)=p$ for $\delta(q,a)=\{p\}$ assuming that the
transition function is a mapping $\delta\colon Q \times\Sigma\to Q$. 
So, any $\dfa$ is complete, that is, the transition function is
total, whereas for the other automata types it is possible that $\delta$ maps to the 
empty set. 
A finite automaton is called \emph{unary} if its set of 
input symbols is a singleton. In this case we use $\Sigma=\{a\}$ throughout 
the paper.

\section{Determinization of unary XNFAs}\label{sect:det}

The problem of evaluating the costs of unary automata simulations was raised
in~\cite{Sipser:1980:lpssa}, and has led to emphasize some relevant
differences with the general case. For example, unary $\nfa$s can be much 
more concise than $\dfa$s, but yet not as much as for the general case.
Moreover, the sophisticated studies in~\cite{Mereghetti:2001:osbua}
reveal tight bounds for many other types of unary finite automata conversions.
The paper and the survey~\cite{pighizzini:2015:ialoaua} are also a
valuable source for further references.

For state complexity issues of unary finite automata, Landau's function
$$
F(n)=\max \{\,\lcm(c_1,c_2\dots,c_l) \mid l\geq 1, c_1,c_2,\dots,c_l \geq 1, 
c_1+c_2+\cdots + c_l=n\,\}
$$
which gives 
the maximal order of the cyclic subgroups of the symmetric group
on~$n$ elements, plays a crucial role, where $\lcm$ denotes the 
\emph{least common multiple}~\cite{Landau:1903:mpgg,Landau:1909:hlvp:book}.
It is well known that the $c_i$ always can be chosen to be relatively
prime. Moreover, an easy consequence of the definition is that
the~$c_i$ always can be chosen such that $c_1,c_2, \dots, c_l \geq 2$,
$c_1+c_2+\cdots +c_l \leq n$, and \mbox{$\lcm(c_1,c_2,\dots, c_l) = F(n)$}
(cf.,~for example,~\cite{Nicolas:1968:ssn}). 

Since~$F$ depends on the irregular distribution of the prime numbers we cannot
expect to express~$F(n)$ explicitly by $n$. 
In~\cite{Landau:1903:mpgg,Landau:1909:hlvp:book}
the asymptotic growth rate $\lim_{n\to\infty} (\ln
F(n)/\sqrt{n\cdot\ln n})=1$ was determined, which for our purposes
implies the (sufficient) rough estimate $F(n) \in
e^{\Theta(\sqrt{n\cdot\ln n})}$
(see also~\cite{ellul:2004:dcmrl,Szalay:1980:mos} for
bounds on~$F$). 
\begin{sloppypar}
The asymptotically tight bound of $F(n)$ for the unary NFA-to-DFA
conversion was presented \mbox{in~\cite{Chrobak:1986:FAUL,Chrobak:2003:ERRFAUL}.}
The proof is based on a normal form for unary $\nfa$s derived 
in~\cite{Chrobak:1986:FAUL}. Each $n$-state unary $\nfa$ can
effectively be converted into an equivalent $O(n^2)$-state $\nfa$ 
in this so-called Chrobak normal form.
However, the original proof in~\cite{Chrobak:1986:FAUL} 
contains an error that has been discovered and fixed 
in~\cite{to:2009:ufaap}. While the correction increases the state costs,
their order of magnitude is not affected.
In connection with magic numbers, more precise and improved state bounds have
been shown in~\cite{geffert:2007:mnshfa} by a completely different proof.
\end{sloppypar}

Let $t,d\geq 0$ be two integers. An \emph{arithmetic progression with 
offset $t$ and period $d$} is the set 
$$
\{\, t + x\cdot d\mid x\geq 0\,\}.
$$

We recall a well-known useful fact 
which is related to number theory and Frobenius numbers 
(see, for example,~\cite{shallit:2008:fpaig} for a survey).

\begin{lemma}\label{lem:all-unary-words}
Let $0 < c_1 < c_2 < \cdots < c_r\leq n$ be positive integers. 
Then the set of integers $z > n^2$ that can be written as a non-negative 
integer linear combination of the $c_i$ 
is $\{\, t + x\cdot d\mid x\geq 0\,\}$, where
$t$ is the least integer greater than $n^2$ that is a 
multiple of $d = \gcd(c_1,c_2,\dots, c_r)$.
\end{lemma}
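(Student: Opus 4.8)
The plan is to establish the two set inclusions separately, the easy one being immediate and the hard one being that every multiple of $d$ exceeding $n^2$ is realizable. Write $S = \{\, \sum_{i=1}^{r} a_i c_i \mid a_i \ge 0 \,\}$ for the set of all non-negative integer linear combinations of the $c_i$. For the easy inclusion, note that every generator $c_i$ is a multiple of $d = \gcd(c_1,\dots,c_r)$, hence so is every element of $S$. Thus any $z \in S$ with $z > n^2$ is a multiple of $d$ larger than $n^2$, and since $t$ is by definition the least such multiple, we may write $z = t + x\cdot d$ with $x \ge 0$. It therefore remains to prove the converse: that each integer of the form $t + x\cdot d$ (equivalently, each multiple of $d$ that is $> n^2$) lies in $S$.

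For the converse I would argue modulo the smallest generator $c_1 \le n$. For each residue $\rho \in \{0,1,\dots,c_1-1\}$ attained by some element of $S$, let $m_\rho$ denote the least element of $S$ congruent to $\rho$ modulo $c_1$. Since $c_1 \in S$ and adding $c_1$ preserves the residue class, $S$ contains \emph{every} integer $\ge m_\rho$ congruent to $\rho$ modulo $c_1$, so the whole problem reduces to bounding $m_\rho$. I would also observe that the attained residues are exactly the multiples of $d$ modulo $c_1$: they form the subgroup of $\mathbb{Z}/c_1\mathbb{Z}$ generated by $c_1,\dots,c_r \bmod c_1$, which equals $d\,\mathbb{Z}/c_1\mathbb{Z}$ because $\gcd(c_1,\dots,c_r)=d$ and $c_1$ is itself a generator. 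Consequently, once I show $m_\rho \le n^2$ for every attained $\rho$, any multiple $z$ of $d$ with $z > n^2$ satisfies $z > m_{\rho}$ and $z \equiv m_\rho \pmod{c_1}$ for $\rho = z \bmod c_1$, whence $z - m_\rho$ is a positive multiple of $c_1$ and $z = m_\rho + k c_1 \in S$.

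The crux, and the step I expect to be the main obstacle, is the bound $m_\rho \le n^2$, which I would obtain by a pigeonhole argument showing that a minimal representation of $m_\rho$ uses at most $c_1 - 1$ generators in total. Indeed, list the chosen generators in any order and form the partial sums $s_0 = 0, s_1, \dots, s_k$ reduced modulo $c_1$. If $k \ge c_1$, there are at least $c_1 + 1$ such residues, so two of them coincide; deleting the block of generators between the two coinciding positions subtracts a \emph{positive} multiple of $c_1$ while leaving the final residue unchanged, producing a strictly smaller element of $S$ in class $\rho$ and contradicting minimality. Hence $m_\rho$ is a sum of at most $c_1 - 1$ generators, each of size at most $c_r \le n$, so $m_\rho \le (c_1 - 1)\,n \le (n-1)\,n < n^2$.

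Putting the pieces together finishes the proof: the bound $m_\rho < n^2$ yields the converse inclusion, which together with the easy inclusion gives the claimed equality with the arithmetic progression $\{\, t + x\cdot d \mid x \ge 0 \,\}$. The points requiring care are the identification of the attained residues with the multiples of $d$ modulo $c_1$, and the legitimacy of the pigeonhole deletion step, which remains valid even when the deleted block consists solely of copies of $c_1$, since it still only removes a positive multiple of $c_1$.
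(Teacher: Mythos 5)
The paper never actually proves this lemma: it is quoted as a well-known fact from the Frobenius-number literature, with a pointer to Shallit's survey, so there is no internal argument to compare yours against. Your blind proof is correct and fills that gap. The easy inclusion is exactly as you say, since $d$ divides every element of $S$ and the multiples of $d$ exceeding $n^2$ are precisely $t, t+d, t+2d,\dots$. For the converse, your reduction modulo $c_1$, the identification of the attained residue classes with $d\,\mathbb{Z}/c_1\mathbb{Z}$, and the pigeonhole bound $m_\rho \le (c_1-1)c_r \le (n-1)n < n^2$ on the least representable element of each attained class together give the result: any multiple $z$ of $d$ with $z > n^2$ satisfies $z \equiv m_\rho \pmod{c_1}$ and $z > m_\rho$, hence $z = m_\rho + k c_1 \in S$. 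The deletion argument is sound, including the boundary case you flag where the deleted block consists only of copies of $c_1$. This is essentially the classical Schur--Brauer argument behind the standard $O(n^2)$ Frobenius-type bounds.

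One step you assert rather than prove, and which carries real weight, is that the attained residues form the full subgroup $d\,\mathbb{Z}/c_1\mathbb{Z}$ rather than merely a sub-semigroup contained in it; the nontrivial direction (every multiple of $d$ modulo $c_1$ is attained) is exactly what guarantees that $m_\rho$ exists for the class of your target $z$. This follows either from the standard fact that a non-empty subset of a finite group closed under the operation is a subgroup, or concretely via B\'ezout: write $d = \sum_i \alpha_i c_i$ with $\alpha_i \in \mathbb{Z}$ and add to each negative coefficient a suitable multiple of $c_1$ (compensating with copies of $c_1$), which produces an element of $S$ congruent to $d$ modulo $c_1$; closure under addition then yields every multiple of $d$. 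With that sentence added, your proof is complete.
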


\begin{sloppypar}
A unary $\xnfa$ $M=\langle Q,\{a\},\delta,q_0,F\rangle$
is in \emph{Chrobak normal form} if, for some $m\geq 0$ and \mbox{$k\geq 0$,} 
\mbox{$Q = \{\, q_i \mid 0\leq i\leq m\,\} \cup C_1 \cup C_2 \cup \cdots \cup C_k$,}
where, for each $1\leq i\leq k$,
$C_i=\{\, p_{i,0}, p_{i,1},\dots ,p_{i,j_{i}-1}\,\}$ for some $j_i\geq 1$,
$\delta(q_i,a)=\{q_{i+1}\}$ for \mbox{$0\leq i\leq m-1$,}
and for each $1\leq i\leq k$ and $0\leq h\leq j_i-1$,
\mbox{$\delta(p_{i,h},a)=\{p_{i,(h+1)\bmod j_i}\}$,} and
\mbox{$\delta(q_m,a)=\{p_{1,0}, p_{2,0},\dots, p_{k,0}\}$.}
\end{sloppypar}

So, an $\xnfa$ is in Chrobak normal form if its structure is a deterministic
tail from $q_0$ to $q_m$, where the automaton makes only a single
nondeterministic decision, which chooses one of the disjoint cycles~$C_i$.

Next, we show how to convert a unary $\xnfa$ into Chrobak normal form. The idea of
the construction is along the lines of the construction
in~\cite{to:2009:ufaap} but with modifications with respect to the
exclusiveness of the $\xnfa$.

\begin{lemma}\label{lem:chrobak}
Let $n\geq 1$. For every unary $n$-state $\xnfa$, 
an equivalent $O(n^3)$-state $\xnfa$ in Chrobak normal form 
can effectively be constructed, such that the sum of the cycle 
lengths is of order $O(n)$.
\end{lemma}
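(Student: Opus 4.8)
The plan is to follow the overall skeleton of the unary \nfa-to-Chrobak-normal-form construction from~\cite{to:2009:ufaap}, but to carry along enough information about the \emph{number} of accepting paths to respect the exclusive acceptance condition. The first step is to record, for every length $z\geq 0$ and every state $q$, the multiplicity $\mu_z(q)$ of $q$ in the computation tree capped at two, that is, $\mu_z(q)=\min\{2,\#\text{paths of length }z\text{ from }q_0\text{ to }q\}$. The vector $(\mu_z(q))_{q\in Q}\in\{0,1,2\}^{Q}$ evolves deterministically in $z$ (this is precisely the $\{0,1,2\}$-refinement of the subset construction underlying the $3^{n}-2^{n}+1$ determinization), and $M$ accepts $a^{z}$ if and only if exactly one accepting state has multiplicity $1$ while all others have multiplicity $0$. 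Hence it suffices to preserve this capped multiplicity profile rather than the exact path counts, and in particular only the three cases $0$, $1$, and ``$\geq 2$'' need to be distinguished.

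Second, I would decompose the underlying transition graph of $M$ on $a$ into its strongly connected components and, as in~\cite{to:2009:ufaap}, separate an acyclic entry part from the disjoint cycles reachable from $q_0$. By Lemma~\ref{lem:all-unary-words}, for each such cycle the set of lengths at which $q_0$ reaches a given cycle state is, beyond an offset of order $n^2$, an arithmetic progression whose period is the $\gcd$ of the involved cycle lengths. The target \xnfa{} then consists of a long deterministic tail $q_0,q_1,\dots,q_m$ followed by a single nondeterministic branching into disjoint cycles. On the tail, which I take of length $m\in O(n^3)$, I simply set $q_z$ to be accepting exactly when $M$ accepts $a^{z}$; this value is read off from the capped profile computed directly from $M$ and handles all ``small'' lengths verbatim. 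For $z>m$ I install one group of cycles per cyclic mode and place accepting states inside the cycles so that the number of cycles whose current state is accepting equals the indicator of $N(z)=1$, where $N(z)$ denotes the true number of accepting paths of $M$ on $a^{z}$; recall that in Chrobak normal form this number of simultaneously accepting cycles is exactly the number of accepting paths of the constructed automaton.

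The main obstacle is to make this last placement faithful to the exclusive condition, and this is where the construction genuinely departs from the \nfa{} case, in which only reachability matters. Two points require care. First, I must argue that for large $z$ the event $N(z)=1$ coincides with ``exactly one cyclic mode is active'': if two modes are simultaneously active, then $a^{z}$ is reachable along two essentially different lassos, so $N(z)\geq 2$ and the input is rejected both by $M$ and, because two cycles are then accepting, by the constructed automaton, so the two semantics stay in step. Second, I must bound the transient after which the capped profile becomes periodic. Reachability stabilizes after $O(n^2)$ symbols, but stabilizing the multiplicities capped at two can cost an additional factor of $n$, since the contributions of up to $n$ interacting cyclic modes superimpose before settling; a pumping argument on the capped profile bounds this pre-period by $O(n^3)$, which fixes the tail length and hence the overall state count. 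The cycles themselves are pairwise disjoint and can be taken of total length $O(n)$ exactly as in the \nfa{} construction, so the automaton has $O(n^3)$ states with cycle lengths summing to $O(n)$, as claimed.
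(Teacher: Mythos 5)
Your overall skeleton (an $O(n^3)$-state deterministic tail that handles short inputs verbatim, followed by a single nondeterministic branching into short cycles encoding the eventually periodic behaviour) matches the paper's construction. But there is a genuine gap at the central step: the claimed equivalence, for large $z$, between ``$N(z)=1$'' and ``exactly one cyclic mode is active.'' You argue only one direction (two active modes yield two distinct lassos, hence $N(z)\geq 2$). The converse is false: a \emph{single} strongly connected component containing two or more distinct simple cycles can by itself generate several accepting paths of the same length, by traversing its simple cycles with different multiplicities. The paper's proof devotes its key case analysis exactly to this situation: for any accepted length $z>m^3+1$ (with $m=2n$ after normalization), a counting argument shows that some simple cycle must be traversed at least $m$ times, and then the exchange identity $x_1|C_1|+x_2|C_2|=(x_1-|C_2|)|C_1|+(x_2+|C_1|)|C_2|$ produces a second accepting path, so all such lengths are rejected by $M$. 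Without this analysis, your plan to place accepting states so that ``the number of accepting cycles equals the indicator of $N(z)=1$'' is not realizable by marking one cycle per mode: if a multi-cycle component is marked at the residues where it accepts, the constructed automaton wrongly accepts lengths with $N(z)\geq 2$; if it is left unmarked, the constructed automaton wrongly accepts lengths at which that component is active simultaneously with a single-cycle component (there $N(z)\geq 3$, yet only one cycle of the constructed automaton accepts). The paper resolves this with a device your proposal lacks: for each relevant component it installs a \emph{pair} of disjoint cycles $A_\alpha,R_\alpha$ of the same length $d$, marking \emph{both} for multi-cycle components---thereby forcing two accepting paths and hence rejection by exclusiveness---while marking only $A_\alpha$ for single-cycle components.

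A second, related gap concerns the transient. You bound the tail length by ``a pumping argument on the capped profile,'' but pumping the profile sequence in $\{0,1,2\}^Q$ only yields eventual periodicity within $3^n$ steps, not $O(n^3)$. The paper obtains the polynomial bound structurally: it first normalizes $M$ to have a single accepting state, covers all accepting paths by superpaths assembled from simple paths and non-shortenable Hamiltonian walks of length at most $|S_i|^2$ inside each strongly connected component (so every path in $\Pi_\alpha$ has length at most $m^2+m$), and then applies Lemma~\ref{lem:all-unary-words} together with the counting argument above to conclude that all behaviour beyond length $m^3+1$ is governed by arithmetic progressions whose difference is the $\gcd$ of the simple cycle lengths of the superpath. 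This structural analysis is what simultaneously yields the $O(n^3)$ tail and the $O(n)$ total cycle length; asserting the bounds without it leaves the lemma unproved.
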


\begin{proof}
Let $M=\langle Q,\Sigma,\delta,q_0,F\rangle$ be an $n$-state $\xnfa$.
Since any unary language over some alphabet is
completely determined by the lengths of the words in the language,
we can safely disregard $\Sigma$ and 
consider the state graph of~$M$ only. For $L(M)=\emptyset$, the theorem
is trivial. So, in the sequel we assume that~$L(M)$ is not empty.
Moreover, we may safely assume that all states $q\in Q$ are reachable and
productive, that is, there is a path from $q_0$ to $q$ and a path from
$q$ to a final state. Now, by adding states and possibly removing some states 
and transitions, we modify $M$ such that there is no incoming transition to 
the initial state, such that $F = \{q_+\}$ is a singleton, and such that $q_+$ is 
the only state without outgoing transitions. 
To this end, all unreachable states together with their incoming and outgoing transitions
are removed. Similarly, all unproductive states together with their incoming
and outgoing transitions
are removed as well. Next, if the initial state has incoming transitions, a
new state without incoming transitions is added whose outgoing transitions go to the successor
states of the initial state. This new state becomes the new initial state.
In order to make $F$ a singleton, we have to take care
about words that are accepted on more than one path. So, first 
a new accepting state~$q_+$ is added. 
For each pair of old accepting states, if both states do not share a common
predecessor state, from each of their predecessor states a transition 
to~$q_+$ is added. Both states become non-accepting. However, if both states have
at least one common predecessor, say $p$, then there are two paths via $p$ to
accepting states. This means that inputs following these paths do not belong
to $L(M)$. In this case, both states become non-accepting, some state $p'$ is
added, and all incoming transitions to $p$ are doubled and are directed 
to~$p'$ as well. Furthermore, a transition from $p$ to $q_+$
and a transition from $p'$ to $q_+$ is added. Similarly, for all common
predecessors of the old accepting states. In this way, we obtain an $\xnfa$
equivalent to $M$ that has the desired properties. For convenience, we call
it also~$M$. The modified $\xnfa$ has at most $m=2n$ states.

From now on, we identify $M$ with its state graph.
Let $S$ be the set of non-trivial strongly connected components of $M$.
A \emph{superpath} in $M$ is a subgraph
$$
\alpha = P_1 S_1 P_2 S_2 \cdots P_\ell S_{\ell} P_{\ell+1},
$$ 
where, 
for $1\leq i \leq \ell$, $S_i \in S$;
for $1\leq i \leq \ell+1$, $P_i$ is a path in $M$ whose inner nodes do not
belong to non-trivial strongly component components of $M$;
the first node of~$P_1$ is~$q_0$; 
the last node of $P_{\ell+1}$ is $q_+$; 
for $1\leq i \leq \ell$, the last node of~$P_i$ belongs to~$S_i$;
for $2\leq i \leq \ell+1$, the first node of~$P_i$ belongs to~$S_{i-1}$.

For every superpath $\alpha$ in $M$, let $L_\alpha$ be the set of all
lengths of paths in $M$ from $q_0$ to $q_+$ that are in $\alpha$. 
It follows that the length of any accepting path in~$M$ belongs to
$\bigcup_\alpha L_\alpha$, where the union 
ranges over all superpaths in $M$.

We define the set $\Psi_\alpha$ to be the subset of paths from $q_0$ to $q_+$ in
$\alpha$ that are simple, that is, no state appears twice. Clearly, the
length of any path in $\Psi_\alpha$ does not exceed~$m$.

Next, we define $\Pi_\alpha$ to be another subset of paths from $q_0$ to $q_+$ in
$\alpha$. In particular, for every path~$\sigma$ in $\Psi_\alpha$,
we put the following extensions $\sigma'$ of $\sigma$ into~$\Pi_\alpha$.
Whenever $\sigma$ enters a strongly connected component $S_i$ in some 
state~$v$, then a Hamiltonian walk in~$S_i$ (that is, a tour that
visits all nodes in $S_i$) that cannot be shortened and that starts and ends 
in $v$ is inserted into $\sigma$.
Note that a  Hamiltonian walk that cannot be shortened is a path from which no
nodes can be removed without obtaining a path that is no longer
Hamiltonian. It needs not to be the shortest Hamiltonian walk in $S_i$.
Since $S_i$ is strongly connected, such Hamiltonian walks exist.
Results in~\cite{hamidoune:1979:slphdlgo}
show that the lengths of such Hamiltonian walks in $S_i$ do not
exceed~$|S_i|^2$, 
where  $|S_i|$ denotes the number of nodes in~$S_i$.
Therefore, the length of any path in $\Pi_\alpha$ does not exceed
$m^2+m$.

Now we consider a fixed superpath $\alpha$ in $M$. 
Let \mbox{$0 < c_1 < c_2 < \cdots < c_r \leq m$} be the lengths of all 
simple cycles in $\alpha$, $\sigma$ in $\Psi_\alpha$, and
$\sigma'$ be an extension of~$\sigma$ in~$\Pi_\alpha$.
Since $\sigma'$ visits each node in $\alpha$ at least once, the set 
$Z_{\alpha,\sigma'}$ of all lengths $z$ for which
$z= |\sigma'| + x_1 c_1 + x_2 c_2 + \cdots + x_r c_r$ is solvable 
in non-negative integers is contained in~$L_\alpha$.
By Lemma~\ref{lem:all-unary-words},
\mbox{$Z_{\alpha,\sigma'} = X_\alpha \cup \{\, t_{\sigma'} + x\cdot d\mid x\geq 0\,\}$,} 
where~$X_\alpha$ contains lengths not larger than $2m^2+m$ and
$t_{\sigma'}$ is the least integer greater than~\mbox{$2m^2+m$} such that
$t_{\sigma'} \equiv |\sigma'|\; (\bmod\; d)$, where 
$d = \gcd(c_1,c_2,\dots, c_r)$. Since the Hamiltonian walks in~$\sigma'$
are (compound) cycles, that is, linear combinations of
$c_1, c_2, \dots, c_r$, the number $d$ divides their lengths and, thus,
we have \mbox{$t_{\sigma'} \equiv |\sigma|\; (\bmod\; d)$.}

On the other hand, the set of all lengths $y$ for which there is a
$\sigma$ in $\Psi_\alpha$ such that
$$
y= |\sigma| + x_1 c_1 + x_2 c_2 + \cdots + x_r c_r
$$ 
is solvable 
in non-negative integers, clearly contains $L_\alpha$.
Therefore, if \mbox{$w\in L_\alpha$} and $w > 2m^2+m$ then
Lemma~\ref{lem:all-unary-words} implies that
there is a
$\sigma$ in $\Psi_\alpha$ such that
$w \equiv |\sigma|\; (\bmod\; d)$.
Since 
$\{\, t_{\sigma} + x\cdot d\mid x\geq 0\,\}\subseteq Z_{\alpha,\sigma'}$,
we conclude $w\in Z_{\alpha,\sigma'}$.

Altogether, we have $L_\alpha = N_\alpha \cup 
\bigcup_{\sigma'\in \Pi_\alpha} \{\, t_{\sigma'} + x\cdot d\mid x\geq 0\,\}$,
where $N_\alpha$ contains lengths not larger than $2m^2+m$.

So far, we have created the prerequisites for constructing the 
normal form without specifically addressing $\xnfa$s. So, the next task
is to assemble an $\xnfa$ $M'=\langle Q',\{a\},\delta',q'_0,F'\rangle$
equivalent to $M$ in Chrobak normal form.

To this end, we start with a deterministic tail consisting of the $m^3+2$ states
$\{\, q'_i \mid 0\leq i\leq m^3+1\,\}$ with
$\delta'(q'_i,a)=\{q'_{i+1}\}$, for $0\leq i\leq m^3$.
A state $q_i$ of the tail becomes accepting if and only if
the input of length $i$ belongs to $L(M)$. So, all words whose length 
does not exceed $m^3+1$ are correctly accepted or rejected.

Next, we want to add the cycles to the initial tail of $M'$.

To construct the cycles appropriately, we consider each superpath~$\alpha$ 
of~$M$ and distinguish three cases, respectively. As before, let 
\mbox{$0< c_1 < c_2 <\cdots < c_r\leq m$} be the lengths of all 
simple cycles in $\alpha$ and \mbox{$d= \gcd(c_1,c_2,\dots, c_r)$.}
We consider all inputs of lengths \mbox{$z > m^3+1 \in L_\alpha$.}

Case 1: There are at least two simple cycles $C_1$ and $C_2$ in $\alpha$.
Then, each path of length $z$ in $\alpha$
that can be shortened to some path in $\Pi_\alpha$ by deleting cycles, sees 
at least $z-(m^2+m)$ nodes in complete simple cycles of $\alpha$. 
If one of these paths contains at least two
different cycles of the same length, then these cycles can replace each other
and, thus, there are at least two accepting paths of length~$z$ in $\alpha$.
Therefore, the input of length $z$ does not belong to $L(M)$. Assume now that
all cycles in these paths have different lengths. Then there are at most $m$
cycles. Assume that each of these cycles is passed through at most $m-1$
times. Then,
\begin{multline*}
z\leq m^2+m + \sum_{i=1}^{m} i (m-1) = m^2+m + \frac{m^2+m}{2} (m-1)\\
= \frac{m^2+m}{2} (m+1)
= \frac{m^3+2m^2+m}{2}
\leq m^3+1 < z.
\end{multline*}
From the contradiction we conclude that there is at least one cycle, say~$C_1$,
that is passed through for $x_1\geq m$ times. Let $C_2$ be passed
through for $x_2$ times. We have $x_1\geq m \geq |C_2|\geq 1$ and $|C_1|\geq 1$.
So, $x_1 |C_1| + x_2 |C_2| = (x_1-|C_2|)|C_1| + (x_2 + |C_1|) |C_2|$.
The equality means that passing $x_1$ times through the cycle $C_1$ and $x_2$
times through the cycle $C_2$ is equivalent to passing $(x_1-|C_2|)$ times
through the cycle~$C_1$ and $(x_2 + |C_1|)$ times through the cycle $C_2$.
So, there are at least two accepting paths of length~$z$ in $\alpha$.
Therefore, the input of length $z$ does not belong to $L(M)$.

Case 2: There is exactly one simple cycle $C_1$ in $\alpha$.
So, there is at most one non-trivial strongly connected component in $\alpha$
and this strongly connected component is the cycle $C_1$. 
Clearly, in this case we have $d= |C_1|$ and the input length $z$ is uniquely
accepted along $\alpha$.

Case 3: There is no simple cycle in $\alpha$. 
In this case, there is no non-trivial strongly connected component in $\alpha$ and 
the unique path of length~$z$ from the initial state 
ends in the initial tail and, by construction, the input
of length $z$ is correctly accepted or rejected.

Now we are ready to add the cycles for $\alpha$ to the tail of $M'$. 
To this end, nothing has to be done for Case 3. 

\begin{sloppypar}
For the remaining cases, the cycle length must be $d$. 
If there is no cycle of length $d$, we add two disjoint
cycles $A_\alpha$ and $R_\alpha$ each of length $d$.
In particular, $A_\alpha$ consists of states 
$\{\, s_{0}, s_{1},\dots ,s_{d-1}\,\}$ with 
$\delta'(s_{h},a)=\{s_{(h+1)\bmod d}\}$, and similarly,~$R_\alpha$ consists of states 
$\{\, r_{0}, r_{1},\dots ,r_{d-1}\,\}$ with 
\mbox{$\delta'(r_{h},a)=\{r_{(h+1)\bmod d}\}$.} The cycles
are connected to the tail by the transitions
$\delta(q'_{m^3+1},a)=\{s_{0}\}$ and
$\delta(q'_{m^3+1},a)=\{r_{0}\}$.
If there are already two cycles 
$A$ and $R$ of length $d$ that have already been constructed
for some other superpath, then they are reused
and nothing is added.
\end{sloppypar}

Next, we identify the accepting states on the cycles.

For Case 1, we consider each $\sigma\in \Psi_\alpha$
and states $s_i$ and $r_i$ become accepting if 
$m^3+1+i+1 \equiv |\sigma|\; (\bmod\; d)$.
In this way, Case 1 is treated correctly, since
now two different paths in $M'$ are accepting 
for the same length.

For case 2, we also consider each $\sigma\in 
\Psi_\alpha$. Here, only state $s_i$ becomes accepting if 
$m^3+1+i+1 \equiv |\sigma|\; (\bmod\; d)$.

In this way, Case 2 is treated correctly, since
only one path is made accepting. However, it
may be that $r_i$ was already accepting. This means that
the corresponding inputs are also accepted by another
superpath.

This concludes the construction of $M'$. Note, if an input is
accepted by different superpaths having different cycle length,
then it clearly does not belong to~$L(M')$, but is also does 
not belong to $L(M)$. Conversely, if an input is accepted
unambiguously by $M$ then it is accepted also unambiguously
by $M'$. So, we conclude $L(M)=L(M')$. Moreover, since the sum
of the different cycle lengths is at most $m$ and each cycle length
appears at most twice, the total sum of the cycle lengths is at most $2m$. 
\end{proof}

Next, we can utilize the normal form to show that the costs for the
determinization of \emph{unary} $\xnfa$s are the same (in the order 
of magnitude) as for $\nfa$s. This is in strict contrast to $\xnfa$s
over a general alphabet. 
The backbone of the construction is similar to the backbone of the
construction given in~\cite{Chrobak:1986:FAUL}. However, here we 
have to treat the cases when inputs are accepted at multiple paths.

\begin{theorem}\label{theo:unary-xnfa-to-dfa} 
Let $n\geq 1$ and~$M$ be a unary $n$-state $\xnfa$.
Then $e^{\Theta(\sqrt{n\cdot\ln n})}$ states are sufficient 
for a $\dfa$ to accept~$L(M)$. 
\end{theorem}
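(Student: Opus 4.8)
The plan is to combine Lemma~\ref{lem:chrobak} with a careful analysis of the cycles to bound the size of an equivalent $\dfa$. By Lemma~\ref{lem:chrobak}, we may assume $M$ is already in Chrobak normal form with $O(n^3)$ states, a deterministic tail of length $O(n^3)$, and a collection of disjoint cycles $C_1,\dots,C_k$ whose total length is $O(n)$. The standard idea for the NFA-to-DFA conversion in the unary case is to simulate all cycles simultaneously by a single cycle whose length is the least common multiple of the individual cycle lengths, preceded by a deterministic tail. For ordinary $\nfa$s this gives a cyclic part of length $\operatorname{lcm}$ of the cycle lengths, and since the cycle lengths sum to $O(n)$, this $\operatorname{lcm}$ is bounded by $F(O(n)) \in e^{\Theta(\sqrt{n\cdot\ln n})}$ using Landau's function. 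First I would set up this product/lcm construction for the $\xnfa$ and argue the tail (of length $O(n^3)$, which is subsumed by the $e^{\Theta(\sqrt{n\cdot\ln n})}$ bound) handles all short inputs exactly.

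The main obstacle, and the place where the $\xnfa$ semantics genuinely enter, is the acceptance condition on the cyclic part. For a $\dfa$ we must decide, for each residue class of the input length modulo $L := \operatorname{lcm}$ of the cycle lengths, whether $M$ accepts \emph{uniquely}. The natural approach is to track, for each residue, the exact \emph{number} of accepting paths modulo the relevant information, but since we only care whether this count equals exactly one, it suffices to record a bounded amount of data. Concretely, in the combined cycle of length $L$, I would mark a state (indexed by its residue) as accepting for the $\dfa$ if and only if across all the original cycles $C_i$ the total number of accepting configurations reached at that residue is exactly one. The key point is that the determinized automaton can compute, at each residue modulo $L$, the sum over all cycles of the number of accepting states hit, and this sum is a fixed finite quantity determined by the Chrobak normal form; the $\dfa$ simply precomputes for each of the $L$ residues whether that sum is $1$.

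Here the hard part will be reconciling multiple superpaths and multiple cycles of the same length: as noted in the proof of Lemma~\ref{lem:chrobak}, an input may be accepted by different superpaths, and uniqueness is a global condition across \emph{all} paths of that length, not within a single cycle. So I would argue that the combined cycle of length $L$ faithfully interleaves every original cycle (since each $|C_i|$ divides $L$, a single traversal of the big cycle synchronizes with all small cycles), and hence a single residue class modulo $L$ captures the complete multiset of accepting states visited at that length by \emph{every} cycle simultaneously. Counting accepting states in this synchronized fashion gives the exact number of accepting paths for each long input, so the $\dfa$ accepts precisely those residues with count one.

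Finally I would assemble the size bound: the $\dfa$ consists of the deterministic tail of length $O(n^3)$ followed by one cycle of length $L = \operatorname{lcm}$ of cycle lengths summing to $O(n)$, hence $L \in e^{\Theta(\sqrt{n\cdot\ln n})}$ by the estimate $F(m)\in e^{\Theta(\sqrt{m\cdot\ln m})}$ applied at $m = O(n)$. Since $O(n^3)$ is dominated by $e^{\Theta(\sqrt{n\cdot\ln n})}$, the total number of $\dfa$ states is $e^{\Theta(\sqrt{n\cdot\ln n})}$, which proves sufficiency. I expect the acceptance-marking step to require the most care, since the whole point is that the $\xnfa$ exclusivity must be evaluated globally over all cycles at once, but the fact that $L$ is the common multiple of all cycle lengths is exactly what makes this bookkeeping possible within a single $\dfa$ cycle.
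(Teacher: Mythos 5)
Your proposal matches the paper's own proof essentially step for step: convert to Chrobak normal form via Lemma~\ref{lem:chrobak}, keep the deterministic tail, replace all cycles by a single cycle of length $\lcm$ of the cycle lengths (bounded by Landau's function), and mark a residue accepting exactly when the total count of accepting states reached across all original cycles at that residue is exactly one. The synchronization argument you flag as the delicate point---each cycle length divides $\ell$, so one residue modulo $\ell$ determines the state reached in every cycle, and hence the exact number of accepting paths---is precisely how the paper handles the exclusivity condition (via counters attached to the states of the big cycle).
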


\begin{proof}
Given a unary $n$-state $\xnfa$ $M$, we first construct an
equivalent $O(n^3)$-state $\xnfa$ $M'$ in Chrobak normal form 
as in the proof of Lemma~\ref{lem:chrobak}.
Let $A_1,R_1,A_2,R_2,\dots , A_k,R_k$, for $k\geq 1$, be the cycles
of $M'$, where \mbox{$|A_i|=|R_i|$,} for $1\leq i\leq k$.
We construct the equivalent 
$\dfa$ $M''= \langle Q,\Sigma,\delta,q_0,F\rangle$ as follows.

\begin{sloppypar}
First, we take over the initial deterministic tail of $M'$, which 
has the $m^3+2$ states \mbox{$\{\, q'_i \mid 0\leq i\leq m^3+1\,\}$,}
where $m=2n$ as in the proof of Lemma~\ref{lem:chrobak}.
Then we add one big cycle of length
\mbox{$\ell = \lcm\{|A_1|,|A_2|,\dots , |A_k|\}$} to the tail. 
To this end the states from the set \mbox{$\{\, p_i \mid 0\leq i\leq \ell-1\,\}$}
are cyclically connected and a transition from $q_{m^3+1}$ to $p_0$ is added. 
\end{sloppypar}

Next, we have to identify the accepting states. To this end, all accepting
states on the tail remain accepting. So, as for $M'$ all words up to length
$m^3+1$ are treated correctly.

Then, we assume that each state $p_i$ of the cycle has a counter attached that 
is initially set to $0$. Now, we consider each cycle $A_i$ of $M'$
consisting of the states $\{s_{0}, s_{1},\dots ,s_{d-1}\}$.
Whenever a state $s_j$ is accepting, then the counters of all
states $\{\, p_t \mid t=j+ x\cdot d, \text{ for } 0\leq x \leq \frac{\ell}{d}-1\,\}$
are increased by one. Similarly, for each cycle $R_i$ of $M'$
consisting of the states $\{r_{0}, r_{1},\dots ,r_{d-1}\}$.
If a state~$r_j$ is accepting, then the counters of all
states \mbox{$\{\, p_t \mid t=j+ x\cdot d, \text{ for } 0\leq x \leq \frac{\ell}{d}-1\,\}$}
are increased by one.

In a last construction step, all states whose counters are exactly one
become accepting, all the others become non-accepting. In this way, all inputs
that are accepted by more than one path in $M'$ are rejected in~$M''$, and all
inputs that are accepted in $M'$ and, thus, in $M$ by exactly one path are
accepted by~$M''$ as well. So, $L(M)=L(M'')$ and, clearly, $M''$ is a $\dfa$.
Moreover, $M''$ has at most 
$$
m^3+2+\ell \leq (2n)^3+2+\ell\leq (2n)^3+2+F(n)
\in e^{\Theta(\sqrt{n\cdot\ln n})}
$$ 
many states.
\end{proof}

It will turn out after Proposition~\ref{prop:xnfanfalower:unary2} that 
the upper bound for the determinization in
Theorem~\ref{theo:unary-xnfa-to-dfa} is tight in the order of magnitude.

\section{Converting unary NFAs to XNFAs and Vice Versa}\label{sect:nondet}

\begin{sloppypar}
Here, again Landau's function 
$$
F(n)=\max \{\,\lcm(c_1,c_2\dots,c_l) \mid l\geq 1, c_1,c_2,\dots,c_l \geq 1, 
c_1+c_2+\cdots + c_l=n\,\}
$$
plays a crucial role.
Recall that the $c_i$ always can be chosen to be relatively prime
such that \mbox{$c_1,c_2, \dots, c_l \geq 2$,}
$c_1+c_2+\cdots +c_l \leq n$, and \mbox{$\lcm(c_1,c_2,\dots, c_l) = F(n)$}.
This, for
example, means that the $c_i$ can be prime powers. An interesting and
simplifying result in~\cite{miller:1987:mog} revealed that, instead of
prime powers, one can sum up the first prime numbers such that the 
sum does not exceed the limit~$n$. More, precisely,
it has been shown in~\cite{miller:1987:mog} that the following function~$G(n)$ 
is of the same order of magnitude as $F(n)$, that is, $G(n)\in \Theta (F(n))$.
Let $p_i$ denote here the $i$th prime number with $p_1=2$.
$$
G(n)= \max \{\,p_1\cdot p_2\cdots  p_l \mid l\geq 1 \text{ and } p_1+p_2+\cdots + p_l\le n\,\}
$$
\end{sloppypar}

In the following theorem we use the function $G(n)$ to describe the worst case
state costs of an $\nfa$ simulating a unary~$\xnfa$.

\begin{theorem}\label{theo:xnfanfalower:unary} 
Let $n\geq 2$. There exists a unary $(n+1)$-state $\xnfa$ $M$ such that every 
$\nfa$ in Chrobak normal form accepting $L(M)$ has at least $G(n)$ states.
\end{theorem}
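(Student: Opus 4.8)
The plan is to exhibit a concrete XNFA built from disjoint prime-length cycles and then exploit the exclusiveness to pin down a rigid residue pattern for $L(M)$. Let $p_1 < p_2 < \cdots < p_l$ be the first primes with $p_1 + p_2 + \cdots + p_l \le n$ and $p_1 p_2 \cdots p_l = G(n)$. I would take $M$ to consist of an initial state $q_0$ together with $l$ pairwise disjoint cycles $C_1, \ldots, C_l$, where $C_i$ has length $p_i$, where $q_0$ leads into every cycle, and where exactly one state of $C_i$ is accepting, placed so that $C_i$ offers an accepting path for $a^z$ precisely when $z \equiv 0 \pmod{p_i}$. This uses $1 + \sum_i p_i \le n+1$ states (one adds a non-accepting sink state if the sum is strictly smaller, so that the count is exactly $n+1$). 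Because acceptance requires \emph{exactly} one accepting path, $a^z \in L(M)$ iff exactly one index $i$ satisfies $p_i \mid z$; in particular membership depends only on $z \bmod G(n)$, and I would let $B \subseteq \mathbb{Z}_{G(n)}$ denote the corresponding set of accepting residues.

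The combinatorial heart of the argument is the claim that no arithmetic progression whose period is a proper divisor of $G(n)$ is contained in $B$. I would prove this through the Chinese Remainder Theorem, identifying $\mathbb{Z}_{G(n)}$ with $\prod_{i=1}^l \mathbb{Z}_{p_i}$; under this identification $B$ becomes the set of tuples having exactly one zero coordinate. A progression of period $d = \prod_{i \in T} p_i$ for a proper subset $T \subsetneq \{1, \ldots, l\}$ fixes the coordinates indexed by $T$ and leaves at least one coordinate free. Letting that free coordinate range over $\mathbb{Z}_{p_j}$, while pinning the remaining free coordinates to nonzero values, changes the number of zero coordinates by one, so the progression must contain a tuple whose zero-count differs from~$1$; hence it cannot lie inside $B$. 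Thus the only progressions contained in $B$ are singletons.

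For the lower bound I would take an arbitrary $\nfa$ $N$ in Chrobak normal form with $L(N) = L(M)$, tail length $t$, and cycle lengths $c_1, \ldots, c_k$. Beyond position $t$ its accepted set is $U = \bigcup_i P_i$, where each $P_i$ is a union of residue classes modulo $c_i$, and $U = \{\, z \ge t \mid (z \bmod G(n)) \in B \,\}$. The key reduction step is that reducing a residue class modulo $c_i$ down to $\mathbb{Z}_{G(n)}$ yields a progression of period $\gcd(c_i, G(n))$, since the image of $c_i \mathbb{Z}$ in $\mathbb{Z}_{G(n)}$ is the subgroup generated by $\gcd(c_i, G(n))$. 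Consequently the image in $\mathbb{Z}_{G(n)}$ of any nonempty $P_i$ is a union of progressions of period $\gcd(c_i, G(n))$, each necessarily contained in $B$. By the claim such progressions can only be singletons, forcing $\gcd(c_i, G(n)) = G(n)$, that is, $G(n) \mid c_i$ and hence $c_i \ge G(n)$. Since $B \neq \emptyset$, at least one $P_i$ is nonempty, so $N$ contains a cycle of length at least $G(n)$ and therefore has at least $G(n)$ states.

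The step I expect to be the main obstacle is exactly this last reduction: a Chrobak $\nfa$ is free to use cycle lengths that do not divide $G(n)$, and a priori such a cycle might cover part of $B$ cheaply. The $\gcd$-collapse observation is what neutralizes this freedom, and I would state it as a small separate fact (in the spirit of the bookkeeping behind Lemma~\ref{lem:all-unary-words}) so that the divisibility case and the general case are handled uniformly. A secondary point requiring care is the treatment of the finitely many short inputs, where the offset caused by the initial transition into the cycles matters; these lengths are absorbed into the tail of $N$ and do not affect the periodic pattern $B$, so they can be dispatched with a brief remark.
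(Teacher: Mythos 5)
Your proof is correct, but your lower-bound analysis takes a genuinely different route from the paper's. The witness automaton is identical in both (an initial state feeding disjoint cycles of the first prime lengths $p_1,\dots,p_l$, so that $a^z$ is accepted iff exactly one $p_i$ divides $z$), and so is the overall skeleton: show that the cycles of a hypothetical small Chrobak $\nfa$ cannot realize this language. But where you argue algebraically, the paper argues by pumping. It introduces the set $K$ of integers coprime to all the $p_j$, first proves that any cycle of the $\nfa$ accepting infinitely many words from $\{\,a^{xp_i}\mid x\in K\,\}$ must have length divisible by $p_i$ (otherwise pumping that cycle $G(n)/p_i$ times yields an accepted length divisible by no $p_j$ at all), and then derives the contradiction inside one such cycle $C_1$ with $p_i\mid c_1$ but $p_j\nmid c_1$: choosing $p$ with $p\cdot c_1\equiv 1\ (\bmod\ p_j)$ and a suitable number of pumps, it manufactures an accepted length divisible by both $p_i$ and $p_j$. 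Your replacement --- the CRT identification of $\mathbb{Z}_{G(n)}$ with $\prod_i\mathbb{Z}_{p_i}$, the coset-rigidity claim that no coset of a nontrivial subgroup lies inside the set $B$ of tuples with exactly one zero coordinate, and the gcd-collapse fact that each Chrobak cycle's accepted lengths project onto \emph{full} cosets of the subgroup generated by $\gcd(c_i,G(n))$ --- is sound: the projection really is onto full cosets because each accepting residue class beyond the tail is infinite, and the toggling argument (zero-counts $z_T$ and $z_T+1$ cannot both equal $1$) covers the degenerate cases $T=\emptyset$ and $l=1$. What your route buys is a cleaner and slightly stronger structural conclusion: every cycle that accepts any word at all must have length divisible by $G(n)$, with no need for the ``infinitely many accepted words'' bookkeeping and with all cycles handled uniformly. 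What the paper's route buys is elementarity: it uses only explicit modular arithmetic on concrete word families, with no group-theoretic framing. One cosmetic point: padding your automaton to exactly $n+1$ states may require more than one dead state, but, as in the paper, ``at most $n+1$ states'' is all the theorem needs.
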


\begin{proof}
\begin{sloppypar}
For $n\geq 2$, let $G(n)$ be represented by the product $p_1\cdot p_2\cdots
p_l$ of the first $l\geq 1$ prime numbers.
We consider the $\xnfa$ $M=\langle Q,\{a\},\delta,q_0,F\rangle$ whose state
graph has $l$ disjoint cycles. Each cycle \mbox{$1\leq i\leq l$} has length $p_i$
and consists of the states 
$\{\, r_{i,0}, r_{i,1},\dots ,r_{i,p_i-1}\,\}$, where
\mbox{$\delta(r_{i,h},a)=\{r_{i,(h+1)\bmod p_i}\}$,} for \mbox{$0\leq h\leq p_i-1$.}
Now, the initial state $q_0$ is nondeterministically connected to the cycles
by \mbox{$\delta(q_0,a)=\{r_{1,1}, r_{2,1},\dots, r_{l,1}\}$.}
The set of accepting states is $F=\{\, r_{i,0}\mid 1\le i\le l\,\}$. 
By construction,~$M$ has at most $n+1$ states.
\end{sloppypar}

The language $L(M)$ accepted by $M$ is
$$
\{\,a^m \mid \text{there is exactly one } i\in\{1,2,\dots,\ell\} \text{ such
    that } m \equiv 0\; (\bmod\; p_i)\,\}.
$$

We define the set of all integers that are not divisible by all $p_i$, $1\leq
i\leq l$,
as 
$$
K=\{\, k\in \mathbb{N} \mid k \text{ is not divisible by all } p_i, 1\leq
i\leq l\,\}.
$$

Assume now, that $L(M)$ is accepted by an $\nfa$ $M'$ in Chrobak normal form with
less than $G(n)$ states, say $m< G(n)$ states. 

Our first goal is to show the claim that for any $p_i$, $1\leq i\leq l$, 
all cycles in the state graph of $M'$ on which infinitely
many words from $\{\,a^{x\cdot p_i} \mid x\in K \,\}$ are accepted,
have a length that is divisible by $p_i$.

Since all words from the infinite set $\{\,a^{x\cdot p_i} \mid x\in K \,\}$
belong to $L(M)$, cycles on which infinitely many such words are accepted
exist. Assume that one of these cycles has a length $c$ not divisible by $p_i$
and let~$a^{x_0\cdot p_i}$ with $x_0\in K$ be one of the accepted words.
Then, the word $w=a^{x_0\cdot p_i+ c\cdot p}$ with $p=\frac{G(n)}{p_i}$
is accepted as well.
But since $c$ and $p$ are not divisible by $p_i$, we have that $|w|$ is not
divisible by $p_i$, either. Moreover, since $x_0\cdot p_i$ is not divisible by any~$p_j$ with $i\neq j$
but~\mbox{$c\cdot p$} is, we have that $|w|$ is not divisible by any~$p_j$ with $i\neq j$, either.
So, $w$ cannot belong to $L(M')$. 
From this contradiction the claim follows.

Since $m< G(n)$, there must be two cycles $C_1$ and $C_2$, say of length $c_1$ and~$c_2$,
such that there are two different prime numbers $p_i\ne p_j$ with $1\leq i,j
\leq l$, where~$c_1$ is divisible by $p_i$ but not divisible by~$p_j$
and infinitely many words from $\{\,a^{x\cdot p_i} \mid x\in K \,\}$ are
accepted in $C_1$, and
where $c_2$ is divisible by~$p_j$ but not divisible by $p_i$
and infinitely many words from $\{\,a^{x\cdot p_j} \mid x\in K \,\}$ are
accepted in~$C_2$.
Since $p_j$ is relatively prime to $c_1$, there is an integer
$p$ such that \mbox{$p\cdot c_1 \equiv 1\; (\bmod\;  p_j)$.} 
Consider some word $w=a^{x_0\cdot p_i}$ with $x_0\in K$
that is accepted in $C_1$. 
Then, the word $a^{x_1\cdot p\cdot c_1+|w|}$ with $(x_1+|w|) \equiv 0\; (\bmod\; p_j)$
is accepted in $C_1$ as well. 
However, this word does not belong to $L(M)$, since it is
divisible by~$p_i$ and $p_j$. 

So, from this contradiction we conclude there is no $\nfa$ in
Chrobak normal form with less than $G(n)$ states. 
\end{proof}

Clearly the upper bound for the simulation of an $\xnfa$ by an $\nfa$ is given
by determinization. 
Thus, we have the following proposition.

\begin{proposition}\label{prop:xnfanfalower:unary2} 
Let $n\geq 2$ and~$M$ be a unary $n$-state $\xnfa$.
Then $e^{\Theta(\sqrt{n\cdot\ln n})}$ states are sufficient 
for an $\nfa$ to accept~$L(M)$. 
\end{proposition}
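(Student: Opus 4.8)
The plan is to obtain the claimed bound as an immediate consequence of the determinization result, together with the lower bound already at hand, so that no new construction is needed. For the sufficiency (the upper bound direction), I would simply observe that every $\dfa$ is by definition an $\nfa$ in which $|\delta(q,a)|=1$ for all states $q$ and all input symbols $a$. Hence the $\dfa$ $M''$ produced in the proof of Theorem~\ref{theo:unary-xnfa-to-dfa}, which accepts $L(M)$ and has $e^{O(\sqrt{n\cdot\ln n})}$ states, is at the same time an $\nfa$ for $L(M)$ of that size. This already yields the $e^{O(\sqrt{n\cdot\ln n})}$ upper bound for $\nfa$s, and is exactly the remark made immediately before the statement.

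To justify that the exponent is of order $\Theta$ rather than merely $O$, I would supply the matching lower bound. Here I would appeal to Theorem~\ref{theo:xnfanfalower:unary}, which exhibits, for each $n\geq 2$, a unary $(n+1)$-state $\xnfa$ $M$ whose language forces every $\nfa$ \emph{in Chrobak normal form} accepting $L(M)$ to use at least $G(n)$ states, and recall that $G(n)\in\Theta(F(n))=e^{\Theta(\sqrt{n\cdot\ln n})}$. To pass from $\nfa$s in Chrobak normal form to arbitrary $\nfa$s, I would invoke the conversion recalled in Section~\ref{sect:det}: every $m$-state unary $\nfa$ has an equivalent $O(m^2)$-state $\nfa$ in Chrobak normal form. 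Consequently, if some $m$-state $\nfa$ accepted $L(M)$, then $O(m^2)\geq G(n)$, and therefore $m\in e^{\Omega(\sqrt{n\cdot\ln n})}$. Combining this with the upper bound gives the stated $\Theta$.

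The only point that needs a moment's care --- and the closest thing to an obstacle --- is checking that taking a square root does not destroy the order of magnitude, that is, that $\sqrt{e^{\Theta(\sqrt{n\cdot\ln n})}}=e^{\Theta(\sqrt{n\cdot\ln n})}$; this holds because halving the exponent only changes the hidden constant inside $\Theta$. Apart from this bookkeeping, both directions are essentially immediate, the substantive work having already been carried out in Theorem~\ref{theo:unary-xnfa-to-dfa} and Theorem~\ref{theo:xnfanfalower:unary}.
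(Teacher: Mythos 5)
Your proposal is correct and takes essentially the same route as the paper: the upper bound is exactly the paper's one-line argument (every \dfa{} is an \nfa, so the determinization of Theorem~\ref{theo:unary-xnfa-to-dfa} already yields the claimed \nfa), and your tightness argument via Theorem~\ref{theo:xnfanfalower:unary}, the quadratic Chrobak-normal-form conversion for unary \nfa{}s, and the observation that taking a square root only halves the constant in the exponent is precisely the discussion the paper gives immediately after the proposition. The only cosmetic difference is that the paper reads the proposition as a pure sufficiency statement (the $\Theta$ referring to the growth of the bounding function $(2n)^3+2+F(n)$) and defers the matching lower bound to the following paragraph, whereas you fold it into the proof; the mathematics is the same.
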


The lower bound in Theorem~\ref{theo:xnfanfalower:unary} 
says that there are $(n+1)$-state $\xnfa$s such that any equivalent
$\nfa$ \emph{in Chrobak normal form} has at least $G(n)$ states.
Moreover, any $n$-state $\nfa$ can be converted into an equivalent
$\nfa$ in Chrobak normal form that has at most $O(n^2)$ states.
So, since $G(n)\in \Theta(F(n))$~\cite{miller:1987:mog},
the lower bound for the state costs of the simulation of an 
$n$-state $\xnfa$ by an $\nfa$ (not necessarily in Chrobak normal form)
is
$$
\Theta(\sqrt{G(n-1)}) = \Theta(\sqrt{e^{\Theta(\sqrt{(n-1)\cdot\ln
      (n-1)})}})
= e^{\Theta(\sqrt{n\cdot\ln n})}.
$$
So, we conclude that the upper bound for the unary $\xnfa$-to-$\dfa$ conversion
shown in Theorem~\ref{theo:unary-xnfa-to-dfa} and the 
upper bound for the unary $\xnfa$-to-$\nfa$ conversion
shown in Proposition~\ref{prop:xnfanfalower:unary2}
are tight in the order of magnitude.

We turn to the simulation of $\nfa$s by $\xnfa$s.
In \cite{okhotin:2012:ufau} it has been shown that the language
$$
L=\{\, a^n\mid n\not\equiv 0\; (\bmod\; \lcm(c_1,c_2,\dots, c_k)) \,\}\cup \{\lambda\},
$$
for $k\ge 1$ and $c_1,c_2,\dots ,c_k\geq 2$ is accepted by
an $\nfa$ with $1+\sum_{i=1}^k c_i$ states, while the smallest $\ufa$ for~$L$
needs at least $1+\lcm (c_1,c_2,\dots ,c_k)$ many states.
The proof of the lower bound is based on a method given
in~\cite{schmidt:1978:sdcfrfl} which is based on a
rank argument on certain matrices. 
After a thorough analysis of the arguments of the method, it turned out that
exclusively accepting computations of the $\ufa$s are used. In other words, 
the arguments can be applied to $\xnfa$s as well.
So, we derive that also the smallest $\xnfa$ needs at
least $1+\lcm (c_1,c_2,\dots ,c_k)$ states to accept the language $L$. 
So, we have the following lower bound. 

\begin{theorem}\label{theo:nfaxnfalower:unary} 
Let $n\geq 2$. There exists a unary $(n+1)$-state $\nfa$ $M$ such that every 
$\xnfa$ accepting $L(M)$ has at least $F(n)+1$ states.
\end{theorem}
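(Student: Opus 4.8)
The plan is to realize the claimed bound through the language $L$ of \cite{okhotin:2012:ufau}, instantiated with an optimal decomposition for Landau's function, and then to transfer its unambiguous lower bound to the exclusive setting by means of the rank method of \cite{schmidt:1978:sdcfrfl}.

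First I would fix integers $c_1, c_2, \dots, c_k \geq 2$ with $c_1 + c_2 + \cdots + c_k \leq n$ and $\lcm(c_1, c_2, \dots, c_k) = F(n)$, which is possible by the fact recalled at the beginning of this section. For these parameters I put
$$
L = \{\, a^m \mid m \not\equiv 0 \;(\bmod\; \lcm(c_1, \dots, c_k)) \,\} \cup \{\lambda\}
$$
and build the witness $\nfa$ $M$ directly: an accepting initial state (capturing $\lambda$) whose first $a$ branches nondeterministically into $k$ disjoint cycles, the $i$th of length $c_i$ and with every state accepting except the one reached after a multiple of $c_i$ steps. Reading $a^m$ along the $i$th cycle accepts exactly when $m \not\equiv 0 \;(\bmod\; c_i)$, so by nondeterminism $a^m$ with $m \geq 1$ is accepted iff $\lcm(c_1,\dots,c_k) \nmid m$; together with the accepting initial state this yields $L(M)=L$ with $1 + \sum_{i=1}^{k} c_i \leq n+1$ states.

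It then remains to show that every $\xnfa$ for $L$ needs at least $1 + \lcm(c_1,\dots,c_k) = F(n)+1$ states. Following \cite{schmidt:1978:sdcfrfl}, for a machine with $s$ states and test words $a^0, a^1, a^2, \dots$ one forms the matrix whose $(i,j)$ entry is the number of accepting paths on $a^{i+j}$ and factors it as a product $A \cdot B$, where $A_{i,q}$ counts paths from the initial state to $q$ on $a^i$ and $B_{q,j}$ counts paths from $q$ to an accepting state on $a^j$; this forces its rank over $\mathbb{Q}$ to be at most $s$. For the present $L$ the corresponding membership matrix is an all-ones matrix perturbed only on the anti-diagonal carrying the rejected lengths (the multiples of $\lcm(c_1,\dots,c_k)$), and it has rank of order $\lcm(c_1,\dots,c_k)$.

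The hard part is precisely the passage from the unambiguous to the exclusive setting, because the factorization delivers the \emph{path count}, whereas $\xnfa$-acceptance asks that this count equal one. For a $\ufa$ the two coincide since the count never exceeds one, so the count matrix equals the membership matrix and the bound transfers verbatim; for an $\xnfa$ the count may be $2$ or more on rejected inputs. The observation that saves the argument --- and that I read as the meaning of ``only exclusively accepting computations are used'' --- is that all the rank computation requires is that the count differ from $1$ on every rejected length, which is exactly the defining condition of an $\xnfa$. Writing the count matrix as the membership matrix plus a correction supported on the rejected anti-diagonal, each corrected entry equals $1$ minus a count that is never $1$ and hence stays nonzero, so the perturbed anti-diagonal pattern keeps full rank and the estimate $s \geq 1 + \lcm(c_1,\dots,c_k) = F(n)+1$ survives. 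I expect the precise bookkeeping of this rank perturbation to be the only genuinely delicate point.
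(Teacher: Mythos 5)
Your proposal follows exactly the route of the paper: the same witness language from~\cite{okhotin:2012:ufau} with an optimal decomposition for Landau's function, the same $(1+\sum_i c_i)$-state $\nfa$, and the same plan of transferring Schmidt's rank bound from $\ufa$s to $\xnfa$s. The gap sits precisely at the step you yourself flag as delicate, and it is genuine: the claim that the count matrix ``keeps full rank'' because its entries on rejected positions merely differ from $1$ is false. Write $\ell=\lcm(c_1,\dots,c_k)=F(n)$ and take the test words $a^0,\dots,a^\ell$. Since in a unary automaton the number of accepting paths depends only on the input length, the count matrix $N$ is Hankel: every entry on the anti-diagonal $i+j=\ell$ is one common value $\beta\neq 1$ (the count on $a^{\ell}$), the corner entry is one value $\gamma\neq 1$ (the count on $a^{2\ell}$), and all other entries are $1$. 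Now take $\beta=2$ and $\gamma=\ell+3$; these are perfectly legitimate values (non-negative integers different from $1$), yet the row vector $\alpha=\bigl(-(\ell+1),1,1,\dots,1\bigr)$ satisfies $\alpha N=0$: each of the columns $0,\dots,\ell-1$ contributes $-(\ell+1)+(\ell-1)\cdot 1+2=0$, and column $\ell$ contributes $-2(\ell+1)+(\ell-1)\cdot 1+(\ell+3)=0$. For instance, for $\ell=2$,
\[
\begin{pmatrix}-3 & 1 & 1\end{pmatrix}
\begin{pmatrix}1&1&2\\ 1&2&1\\ 2&1&5\end{pmatrix}
=\begin{pmatrix}0&0&0\end{pmatrix}.
\]
So under the constraints you actually use, the rank can drop to $\ell$; the most a perturbation argument of this kind can ever give is $\operatorname{rank}(N)\ge\operatorname{rank}(N-J)-\operatorname{rank}(J)=\ell$ (with $J$ the all-ones matrix), i.e., $s\ge F(n)$, one short of the claimed $F(n)+1$ --- and your stated justification for full rank is refuted by the matrix above.

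To close the gap one needs more than ``the count differs from $1$ on every rejected length'': one needs the Hankel structure over \emph{unboundedly many} lengths combined with the non-negativity of path counts. For example: for an $s$-state $\xnfa$ the count sequence $c_m$ satisfies $c_m=u^{T}A^{m}v$ for the adjacency matrix $A$, so $C(x)=\sum_m c_m x^m$ is rational with denominator degree at most $s$. Decompose $C(x)=\frac{1}{1-x}+G(x^{\ell})$ with $G(y)=\sum_{k\ge 1}(c_{k\ell}-1)\,y^{k}$ (rational, since its coefficient sequence is an arithmetic-progression subsequence of a linearly recurrent sequence). All coefficients of $G$ are nonzero, so $G$ is not a polynomial and has a pole $\rho\neq 0$; hence $G(x^{\ell})$ has at least $\ell$ poles, at the $\ell$-th roots of $\rho$. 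The pole of $\frac{1}{1-x}$ at $x=1$ could only be cancelled if $G$ had a simple pole at $y=1$ with residue exactly $\ell$, which would force $c_{k\ell}=1-\ell<0$ for all large $k$ --- impossible for path counts. Hence $C$ has at least $\ell+1$ poles, so $s\ge\ell+1=F(n)+1$. The paper itself is terse here (it asserts, after ``a thorough analysis,'' that Okhotin's argument uses only exclusively accepting computations), but the local rank-perturbation bookkeeping you propose cannot be that analysis: some global argument exploiting non-negativity, like the one sketched, is unavoidable.
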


Clearly the upper bound for the simulation of an $\nfa$ by an $\xnfa$ is given
by determinization. 
Thus, we have the following proposition.

\begin{proposition}\label{prop:nfa-xnfa-lower:unary} 
Let $n\geq 2$ and~$M$ be a unary $n$-state $\nfa$.
Then $e^{\Theta(\sqrt{n\cdot\ln n})}$ states are sufficient 
for an $\xnfa$ to accept~$L(M)$. 
\end{proposition}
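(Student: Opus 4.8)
The plan is to obtain this upper bound by composing two conversions: first turn the given unary $\nfa$ into an equivalent $\dfa$, and then observe that a $\dfa$ is \emph{already} an $\xnfa$ for the same language, with no change in the number of states. For the first step I would invoke the classical unary $\nfa$-to-$\dfa$ conversion of Chrobak recalled in Section~\ref{sect:det}~\cite{Chrobak:1986:FAUL,Chrobak:2003:ERRFAUL}: every unary $n$-state $\nfa$ $M$ admits an equivalent $\dfa$ whose number of states is bounded by $O(n^2+F(n))$, which lies in $e^{\Theta(\sqrt{n\cdot\ln n})}$ by the growth estimate for Landau's function $F$ stated earlier.

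For the second step, the key observation is that a complete $\dfa$ has exactly one complete computation path on every input $a^m$. Consequently, when this automaton is interpreted as an $\xnfa$, the number of accepting paths on $a^m$ is either $1$, namely when its unique path ends in an accepting state, or $0$ otherwise. Hence the $\xnfa$ acceptance condition ``exactly one accepting path'' coincides on a $\dfa$ with ordinary $\dfa$ acceptance, so the very same automaton, read as an $\xnfa$, accepts $L(M)$ with the same number of states. Combining the two steps yields an $\xnfa$ for $L(M)$ with $e^{\Theta(\sqrt{n\cdot\ln n})}$ states, which is the asserted sufficiency bound.

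For the matching lower bound that justifies $\Theta$ rather than merely $O$ in the exponent, I would point to Theorem~\ref{theo:nfaxnfalower:unary}, which exhibits $(n+1)$-state unary $\nfa$s such that every equivalent $\xnfa$ needs at least $F(n)+1$ states; since $F(n)+1\in e^{\Theta(\sqrt{n\cdot\ln n})}$, the bound is tight in the order of magnitude, exactly as was argued for the companion Proposition~\ref{prop:xnfanfalower:unary2}.

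This proof has essentially no hard part, since all the quantitative work is already carried by Chrobak's conversion and the estimate for $F$. The only point that must be stated with care is the embedding of $\dfa$s into $\xnfa$s, that is, the fact that determinism forces the number of accepting paths to be $0$ or $1$, so that no spurious multiplicity of accepting paths can arise to violate the $\xnfa$ acceptance condition; everything else is a direct citation of results already available in the excerpt.
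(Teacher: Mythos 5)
Your proposal is correct and follows essentially the same route as the paper, whose one-line proof ("the upper bound for the simulation of an $\nfa$ by an $\xnfa$ is given by determinization") is exactly your composition of Chrobak's unary $\nfa$-to-$\dfa$ conversion with the observation that a complete $\dfa$, having exactly one computation path per input, satisfies the exclusive acceptance condition verbatim. Your appeal to Theorem~\ref{theo:nfaxnfalower:unary} for tightness also matches the paper's remark following the proposition.
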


As before, we also conclude here that 
the lower bound and upper bound are tight in the order of magnitude.

\section{Computational Complexity}\label{sect:compcomp}

In this section, we discuss the computational complexity of
decidability questions. In particular, we consider general membership, emptiness, universality, inclusion,
and equivalence with respect to the unary case. These problems have been studied in~\cite{kutrib:2023:coenfa:proc,kutrib:2023:coenfa}
in case of general alphabets. It turns out here that the general membership problem
in the unary case shares the same computational complexity with the general case,
namely, both problems are {\sf NL}-complete.
However, the questions of emptiness, universality, inclusion, and equivalence
turn out to be {\sf coNP}-complete in the unary case, whereas these questions have
been shown to be {\sf PSPACE}-complete in the general case~\cite{kutrib:2023:coenfa:proc,kutrib:2023:coenfa}.

\begin{theorem}\label{theo:compmember:unary}
The problem of testing the general membership for unary $\xnfa$s 
is {\em \textsf{NL}}-complete.
\end{theorem}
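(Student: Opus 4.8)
The plan is to prove the two inclusions separately: membership in \textsf{NL} and \textsf{NL}-hardness. For the upper bound I would first observe that a unary $\xnfa$ is in particular an $\xnfa$, so the unary membership problem is a restriction of the general membership problem, which is already known to lie in \textsf{NL}~\cite{kutrib:2023:coenfa}. To keep the argument self-contained I would re-derive it. Given $M$ and an input $a^m$, one has to decide whether there is \emph{exactly} one accepting path of length $m$ from $q_0$, which I would express as the conjunction of two predicates: ``there is at least one accepting path'' and ``there are not at least two distinct accepting paths''. The first is ordinary $\nfa$-reachability and is decided in \textsf{NL} by guessing a path step by step while storing only the current state and using the position of the read-only input head as the step counter. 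For the second I would guess two paths simultaneously, keeping both current states together with a one-bit flag recording whether the paths have already diverged; the flag is set as soon as the two current states differ, and one accepts iff both final states are accepting and the flag is set. Both predicates lie in \textsf{NL}, and since $\textsf{NL}=\textsf{coNL}$ by the Immerman--Szelepcs\'enyi theorem and \textsf{NL} is closed under intersection, the predicate ``exactly one accepting path'' lies in \textsf{NL}.

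For the lower bound, the naive idea---reducing directed $s$-$t$ reachability by reading the graph as a unary automaton---runs into the obstacle that is the heart of the proof: even when $t$ is reachable there are typically \emph{several} $s$-$t$ walks of the prescribed length, so the $\xnfa$ would \emph{reject}. I would therefore not reduce from reachability but from its complement. Since non-reachability is \textsf{coNL}-complete and $\textsf{coNL}=\textsf{NL}$, it is \textsf{NL}-hard, so a logspace reduction from it suffices.

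Concretely, given an instance $(H,s,t)$ with $N=|V|$, I would build a unary $\xnfa$ $M$ with a fresh initial state $q_0$ whose single nondeterministic choice on the first symbol branches into two disjoint gadgets. The first gadget is a single accepting state carrying a self-loop; it contributes \emph{exactly one} accepting path on every input, independently of $(H,s,t)$. The second gadget is $H$ read as a unary $\nfa$ (one state per vertex, the edges as $a$-transitions, entry at $s$, and a self-loop added at the single accepting state $t$); with the length fixed to $m=N$, this gadget has at least one accepting path of length $m$ iff $t$ is reachable from $s$. Hence the total number of accepting paths of $M$ on $a^{N}$ equals $1$ plus the number of accepting graph-walks, so $M$ accepts $a^{N}$ exclusively iff the second summand is $0$, i.e.\ iff $t$ is \emph{not} reachable from $s$. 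The map $(H,s,t)\mapsto(M,a^{N})$ is computable in logarithmic space, which yields \textsf{NL}-hardness and, with the upper bound, \textsf{NL}-completeness. The step I expect to be most delicate is exactly this reorientation toward the complement together with the ``one guaranteed path plus reachability paths'' gadget, which is what converts the existential question about walks into the uniqueness condition demanded by the exclusive acceptance mode.
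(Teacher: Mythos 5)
Your proof is correct and follows essentially the same route as the paper: the upper bound decomposes acceptance into ``at least one accepting path'' and ``not at least two accepting paths'' and invokes $\textsf{NL}=\textsf{coNL}$, and the lower bound reduces from the complement of an \textsf{NL}-complete problem using a fresh initial branch that contributes exactly one guaranteed accepting path, so that exclusive acceptance holds iff no further accepting path exists. The only cosmetic difference is the source problem: the paper reduces from non-membership for unary $\nfa$s (citing its \textsf{NL}-completeness), whereas you reduce from graph non-reachability and inline the translation of reachability into exact-length unary $\nfa$ membership via the self-loop at $t$.
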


\begin{proof}
To show that the problem is in {\sf NL} for unary $\xnfa$s we can use the same construction
that has been described in~\cite{kutrib:2023:coenfa:proc,kutrib:2023:coenfa} for general alphabets.
The basic idea is to test whether an input~$w$ is not accepted by a given $\xnfa$ $A$. This means
that either there is no accepting path in the computation tree for~$w$ or there are at least two
accepting paths. In the first case, the input~$w$ is not accepted by~$A$ even if~$A$ is considered as
an $\nfa$. Hence, this case can be solved in {\sf NL} using the known algorithms for $\nfa$s.
The second case can be checked by guessing two different accepting paths in the computation tree.
To this end, one has to keep track of two states representing the current position on the two paths.
Since this can be realized in {\sf NL}, the general membership problem is in~{\sf NL} in particular for 
unary $\xnfa$s.

To show the \textsf{NL}-hardness of the general membership problem for unary $\xnfa$s 
we can in principle apply the reduction that is described in~\cite{kutrib:2023:coenfa} for general alphabets.
To adapt it to the unary case we have to use the fact that the membership problem for unary $\nfa$s remains \textsf{NL}-complete 
(see, e.g.,~\cite{jones:1975:sbracp}) and we have to observe that the $\xnfa$ constructed in the reduction is unary, since the given $\nfa$ is unary.
Since the reduction described in~\cite{kutrib:2023:coenfa} is not yet published we provide the reduction here
for the sake of completeness.

To show the \textsf{NL}-hardness of the general membership problem 
we reduce the non-membership problem for $\nfa$s
which is known to be \textsf{NL}-complete, since
the membership problem for $\nfa$s is \textsf{NL}-complete.

Let $\langle A,w \rangle$ be the encoding of an $\nfa$
$A=\langle Q, \{a\}, \delta, q_0, F \rangle$ and an input word~$w$.
We construct an $\xnfa$ $A'=\langle Q \cup \{p_0,p\}, \{a\}, \delta', p_0, F' \rangle$,
where~$p_0$ and~$p$ are two new states not belonging to $Q$. The accepting states $F'$ are
defined as $F'=F \cup \{p_0,p\}$, if $\lambda \in L(A)$, and $F'=F \cup \{p\}$ otherwise.
The transition function $\delta'$ is defined as follows.
First, $A'$ has the same behavior as $A$ on states from~$Q$.
Formally, $\delta'(q,a)=\delta(q,a)$ for all $q \in Q$.
Second, from the new initial state $p_0$ all states are reached that are
reached from the initial state~$q_0$ of~$A$. Additionally, the new state~$p$
is reached from~$p_0$.
Formally, $q' \in \delta'(p_0,a)$, if $q' \in \delta(q_0,a)$,
and $p \in \delta'(p_0,a)$.
Finally, the state~$p$ acts as an accepting sink state, that is, $p \in \delta'(p,a)$.

The reduction from the encoding $\langle A,w \rangle$ to an encoding $\langle A',w \rangle$
can be realized by a deterministic logarithmically space-bounded Turing machine. 

For the correctness of the reduction we have to show that
the $\xnfa$ $A'$ accepts~$w$ if and only if $w$ is not accepted by the $\nfa$~$A$.
On the one hand, if~$w$ is accepted by~$A'$, then $p \in \delta'(p_0,w)$ and $\delta'(p_0,w) \cap F=\emptyset$,
since otherwise there would be at least two accepting paths for~$w$. Hence, $w$ is not
accepted by the $\nfa$~$A$.
On the other hand, if $w$ is not accepted by~$A'$, then $p \in \delta'(p_0,w)$ and 
$\delta'(p_0,w) \cap F \neq \emptyset$, since there must be at least two accepting paths for~$w$.
Hence,~$w$ is accepted by the $\nfa$~$A$.
This concludes the correctness of the reduction and shows the \textsf{NL}-hardness of
the general membership problem for $\xnfa$s. Altogether, we obtain that the
general membership problem for $\xnfa$s is \textsf{NL}-complete.
\end{proof}

It is known that the emptiness problem for unary $\nfa$s is {\sf NL}-complete. In contrast,
we show the problem becomes {\sf coNP}-complete for unary $\xnfa$s. In the following proofs we need a result
obtained in~\cite{kutrib:2023:coenfa:proc,kutrib:2023:coenfa} on the conversion of $\xnfa$s to $\dfa$s
in case of general alphabets.

\begin{theorem}\label{theo:xnfatodfa:upper}\cite{kutrib:2023:coenfa:proc,kutrib:2023:coenfa}
Let $n\geq 1$ and~$M$ be an $n$-state $\xnfa$.
Then $3^n-2^n+1$ states are sufficient for a $\dfa$ to accept~$L(M)$. 
\end{theorem}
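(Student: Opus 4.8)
The plan is to refine the classical subset construction so that, instead of merely recording which states are reachable, each $\dfa$ state records for every state $q \in Q$ whether $q$ is reached by zero paths, by exactly one path, or by two or more paths on the input read so far. Since the number of accepting paths for an input $w$ equals $\sum_{q \in F}$ (number of paths from $q_0$ to $q$ labeled $w$), and the $\xnfa$ accepts $w$ exactly when this sum equals $1$, it suffices to know each of these path counts only up to the threshold $2$. I would therefore work with saturating counts in the set $\{0,1,2\}$, where $2$ stands for ``at least two'', together with saturating addition defined by $\min(2,x+y)$.

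First I would verify that saturating addition faithfully tracks the path-count class: if $a_p$ denotes the true number of $w$-labeled paths from $q_0$ to a state $p$, then the saturated value $\min(2,a_p)$ depends only on the saturated values of the predecessors, because the true count of $p$ after reading one more symbol is the sum of the true counts of its predecessors, and that sum is $0$, $1$, or $\geq 2$ precisely as determined by the saturated summands. Hence the map $f\colon Q \to \{0,1,2\}$ recording these classes is updated deterministically by $\delta''(f,a)(p) = \min\bigl(2, \sum_{q \,:\, p \in \delta(q,a)} f(q)\bigr)$. The initial configuration is the map $f_0$ with $f_0(q_0)=1$ and $f_0(q)=0$ otherwise, and a configuration is declared accepting exactly when one accepting state is mapped to $1$ and every other accepting state is mapped to $0$, which is equivalent to $\sum_{q\in F} f(q) = 1$.

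The counting then hinges on the observation that the configurations in which no state carries the value $1$ form an absorbing, uniformly rejecting class. Indeed, if $f$ maps every state into $\{0,2\}$, then each successor value is a saturated sum of elements of $\{0,2\}$ and hence again lies in $\{0,2\}$; so once such a configuration is reached it persists under all further transitions. Moreover every such configuration has $\sum_{q\in F} f(q) \neq 1$ and is therefore always rejecting, so all of them can be merged into a single non-accepting dead state without changing the accepted language. The remaining configurations are exactly the maps $Q \to \{0,1,2\}$ assigning $1$ to at least one state, of which there are $3^n - 2^n$; adding the one dead state yields the claimed bound $3^n - 2^n + 1$.

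I expect the main obstacle to be the first step, namely rigorously justifying that saturating $\{0,1,2\}$-arithmetic correctly tracks the exactly-one-path condition through arbitrarily long computations, since this is where the exclusive acceptance semantics genuinely differs from ordinary reachability and where one must be careful that capping counts at $2$ never blurs the distinction between ``exactly one'' and ``at least two'' accepting paths. Once this invariant is established, the absorbing-dead-state argument and the final tally are routine.
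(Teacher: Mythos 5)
Your proposal is correct and is essentially the same argument that underlies the cited bound: recording each state's path multiplicity saturated at $2$ (equivalently, a pair of disjoint subsets of states reached exactly once and reached at least twice), verifying that saturated addition is compatible with the transition step, and merging the $2^n$ configurations containing no multiplicity-$1$ state into a single rejecting sink is precisely how the count $3^n-2^n+1$ arises in \cite{kutrib:2023:coenfa:proc,kutrib:2023:coenfa}. Note that the present paper only quotes this theorem from those references rather than reproving it, so your write-up effectively supplies the omitted proof by the standard route.
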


\begin{theorem}\label{theo:compempty:unary} 
The emptiness problem for unary $\xnfa$s 
is {\em \textsf{coNP}}-complete.
\end{theorem}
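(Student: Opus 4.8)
The plan is to prove both membership in \textsf{coNP} and \textsf{coNP}-hardness. For the upper bound it is cleaner to place the complementary problem, \emph{non-emptiness}, in \textsf{NP}; for hardness I would reduce the \textsf{coNP}-complete problem of unsatisfiability to emptiness, via the classical prime encoding of truth assignments.

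First, for the upper bound I would bound the length of a shortest witness. Given a unary $n$-state $\xnfa$ $M$, Theorem~\ref{theo:xnfatodfa:upper} gives an equivalent $\dfa$ with fewer than $3^n$ states, and a nonempty unary language recognized by an $N$-state $\dfa$ contains a word of length less than $N$. Hence, if $L(M)\neq\emptyset$ then $M$ accepts some $a^m$ with $m<3^n$, i.e.\ with only $O(n)$ bits, which a nondeterministic machine can guess in binary. The remaining task, which I expect to be the main obstacle, is to verify \emph{in polynomial time} that $a^m$ has exactly one accepting path even though $m$ is exponentially large. To this end I would count accepting paths under the truncation that caps values at $2$: let $T=(\{0,1,2\},\oplus,\otimes)$ with $x\oplus y=\min(x+y,2)$ and $x\otimes y=\min(x\cdot y,2)$. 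The map $c\mapsto\min(c,2)$ from $\mathbb{N}$ to $T$ is a semiring homomorphism, so for the $0/1$ transition matrix $\hat{A}$ of $M$ read over $T$, the entry $(\hat{A}^{m})_{p,q}$ is exactly the number of length-$m$ paths from $p$ to $q$ capped at $2$. Computing $\hat{A}^{m}$ by repeated squaring costs $O(\log m)=O(n)$ matrix multiplications over $T$, each of polynomial cost, and $M$ accepts $a^m$ iff $\bigoplus_{q\in F}(\hat{A}^{m})_{q_0,q}=1$. Thus non-emptiness is in \textsf{NP} and emptiness is in \textsf{coNP}.

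For the lower bound I would reduce \textsc{Unsat} to emptiness. Given a $3$-CNF formula $\varphi$ over variables $v_1,\dots,v_n$, assign to $v_i$ the $i$-th prime $p_i$ and read an input length $m$ as the assignment with $v_i$ \emph{true} iff $m\equiv 1\ (\bmod\ p_i)$ and \emph{false} iff $m\equiv 0\ (\bmod\ p_i)$; by the Chinese Remainder Theorem every assignment is coded by some arbitrarily large $m$. I would build a unary $\xnfa$ whose single initial nondeterministic step branches into disjoint deterministic cycles: one \emph{master} self-loop accepting at every position; one cycle of length $p_i$ per variable, accepting exactly at the positions $2,3,\dots,p_i-1$ (detecting an invalid code of variable $i$); and one cycle per clause, of length the product of the three relevant primes, accepting exactly at the single residue (obtained by CRT) that falsifies that clause. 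Since each cycle is deterministic, it contributes $0$ or $1$ accepting path on $a^m$, so the number of accepting paths equals $1$ (from the master loop) plus the number of ``bad'' conditions that $m$ triggers.

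Consequently $a^m$ has exactly one accepting path iff $m$ is a valid code of an assignment satisfying all clauses, so $L(M)=\emptyset$ iff $\varphi\in\textsc{Unsat}$. Because the first $n$ primes satisfy $\sum_i p_i=O(n^2\log n)$ and there is one polynomial-length clause cycle per clause, the $\xnfa$ has polynomially many states and is produced in logarithmic space, making the reduction valid. The only point requiring care is the finitely many short inputs: I would prepend a deterministic tail (as in the Chrobak normal form of Lemma~\ref{lem:chrobak}) deciding all lengths below the cycle offsets directly, which is harmless since any satisfying assignment has arbitrarily large codes $m$ (obtained by adding multiples of $\prod_i p_i$). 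Combined with the upper bound, this yields \textsf{coNP}-completeness. \eoe
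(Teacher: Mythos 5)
Your proposal is correct, and its overall skeleton matches the paper's proof: non-emptiness is placed in \textsf{NP} via the $3^n$ determinization bound of Theorem~\ref{theo:xnfatodfa:upper} followed by guessing a short length and verifying acceptance through matrix powers, and \textsf{coNP}-hardness comes from the unary prime encoding of 3-CNF formulas. Within that skeleton you differ in two ways worth recording. First, in the verification step the paper raises the integer adjacency matrix $A_M$ to the power $m$ by fast exponentiation and asserts each multiplication costs polynomial time; taken literally this overlooks that entries of $A_M^m$ count paths and may require $\Theta(3^n)$ bits. Your capped semiring $T$ with $x\oplus y=\min(x+y,2)$ and $x\otimes y=\min(xy,2)$ repairs exactly this point: since capping at $2$ is a semiring homomorphism, the capped powers still distinguish zero, one, and at least two accepting paths, which is all the exclusive acceptance condition needs, and every arithmetic operation stays constant-size. (Your acceptance test $\bigoplus_{q\in F}(\hat A^m)_{q_0,q}=1$ is also stated more carefully than the paper's ``exactly one entry with value $1$'', which as written would wrongly accept when one accepting entry is $1$ and another exceeds $1$.) Second, for hardness the paper invokes Theorem~6.1 of Stockmeyer and Meyer as a black box to get a unary $\nfa$ $M$ with $L(M)\neq\{a\}^*$ iff the formula is satisfiable, and then builds an $\xnfa$ that nondeterministically branches between $M$ and a one-state all-accepting $\dfa$, so that the guaranteed extra accepting path flips acceptance into complementation. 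Your ``master self-loop'' plus variable- and clause-cycles is precisely this construction with the Stockmeyer--Meyer regular expression unfolded into explicit cycles: the master loop plays the role of the universal branch, and the cycles realize the union of violation detectors. This makes your reduction self-contained at the cost of having to check the size bound $\sum_i p_i=O(n^2\log n)$ and log-space computability yourself, which you do. The remaining loose ends (the offset between input length and cycle position, clauses with repeated variables, and the treatment of short inputs, where simply rejecting them suffices since satisfying assignments have arbitrarily large codes) are routine normalizations and do not affect correctness.
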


\begin{proof}
We will show that the non-emptiness problem for unary $\xnfa$s is {\sf NP}-complete
which implies that the emptiness problem is {\sf coNP}-complete.
To show that the non-emptiness problem belongs to~{\sf NP} we use a similar approach
as described in Theorem~6.1 in~\cite{Stockmeyer:1973:WPR}. Let $M$ be an $\xnfa$ over
a unary alphabet~$\{a\}$
with state set $Q=\{q_1, q_2, \ldots, q_n\}$, initial state~$q_1$, and transition function~$\delta$. 
By applying Theorem~\ref{theo:xnfatodfa:upper} we know that there exists an
equivalent $\dfa$ that has at most $3^n$ states.
It is clear that $L(M)$ is not empty if and only if $M$ accepts a word of length $m \le 3^n$.

Now, the idea is first to guess a length $m \le 3^n$
in ternary representation $m_1m_2 \cdots m_n$ and to check whether there is 
exactly one
path of length~$m$ in~$M$ leading from the initial state to an accepting state.
The latter can be realized by mapping the transition function of~$M$ to its corresponding 
adjacency matrix~$A_M$ where we set an entry $A_M[i,j]=1$ if and only if
$q_j\in \delta(q_i,a)$, 
for $1 \le i,j \le n$. Then, $a^m \in L(M)$ if and only if the first row of $A_M^m$ has 
exactly one entry corresponding to an accepting state with value~$1$. 
Thus, we have as second task to compute the matrix product 
\mbox{$A_M^m=A_M^{m_1 \cdot 3^{n-1}} \cdot A_M^{m_2 \cdot 3^{n-2}}\cdots \cdot A_M^{m_n}$}
by inspecting the ternary counter. The matrix $A_M^m$ can be computed
by successively cubing and multiplying $A_M$. For example, let $m=22$ and its ternary notation be $211$.
Then, we have to multiply $A_M \cdot A_M^3 \cdot A_M^9 \cdot A_M^9$. In general, we
have at most \mbox{$3 \cdot 2 \log_3 (m) \le 6n$}
matrix multiplications. Since every matrix multiplication can be 
realized in time $n^2$, we obtain that $A_M^m$ can be computed in deterministic time bounded by a polynomial in~$n$.
Finally, the first row of the resulting matrix~$A_M^m$ has to be inspected. 
Altogether, these three tasks can be realized in nondeterministic time bounded by a polynomial in~$n$.
Hence, the complete procedure is in {\sf NP}.

To show that the non-emptiness problem is {\sf NP}-hard we use again a similar approach
as described in Theorem~6.1 in~\cite{Stockmeyer:1973:WPR}. It is shown there that a given Boolean formula
in conjunctive form with exactly three literals per conjunct is satisfiable if and only if a regular unary
language $L$ described by a regular expression is not equal to $\{a\}^*$. Moreover, the reduction is computable
in logarithmic space.
Since a language described by a regular expression can equivalently be described by an $\nfa$ of similar size,
we let now $L$ be described by an $\nfa$~$M$. Moreover, we construct a one-state $\dfa$ $M'$ that accepts $\{a\}^*$. 
Then, we construct an $\xnfa$ $M''$ that initially guesses whether it simulates for the complete input the 
$\nfa$ $M$ or the $\dfa$ $M'$. Since $M''$ is an $\xnfa$ we obtain that $L(M)=\{a\}^*$ if and only if $L(M'')=\emptyset$.
Hence, we have $L(M'') \neq \emptyset$ if and only if $L(M) \neq \{a\}^*$
if and only if the given Boolean formula is satisfiable.
Since the constructions of $M$, $M'$, and $M''$ can be realized in logarithmic space, we obtain the \textsf{NP}-hardness
of the non-emptiness problem for $\xnfa$s and, thus, the \textsf{coNP}-hardness of the emptiness problem for $\xnfa$s.
\end{proof}

\begin{theorem}\label{theo:compuniv:unary}
The problems of testing universality, 
inclusion, and equivalence for unary $\xnfa$s 
are {\em \textsf{coNP}}-complete.
\end{theorem}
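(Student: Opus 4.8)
The plan is to establish membership in \textsf{coNP} and \textsf{coNP}-hardness separately for all three problems. Membership is handled uniformly by showing that the complementary problems---non-universality, non-inclusion, and non-equivalence---lie in \textsf{NP}. In each case a counterexample is a single word $a^m$. By Theorem~\ref{theo:xnfatodfa:upper} every language involved is accepted by a $\dfa$ with fewer than $3^{n}$ states, so the complement (for universality), resp.\ the difference or symmetric difference realised by the product $\dfa$ (for inclusion and equivalence), is accepted by a $\dfa$ with fewer than $3^{n}$, resp.\ $3^{n_1+n_2}$, states; hence a counterexample, if one exists, has length below that bound, and its length~$m$ can be guessed in ternary using $O(n)$, resp.\ $O(n_1+n_2)$, digits. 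To check $a^m\in L(M)$ I would compute the first row of $A_M^{m}$ by repeated cubing and squaring, exactly as in the proof of Theorem~\ref{theo:compempty:unary}, but carrying out all arithmetic in the semiring that collapses every value $\ge 2$ to~$2$; this keeps entries in $\{0,1,2\}$, keeps each product polynomial, and still records faithfully whether the number of accepting paths is $0$, $1$, or at least~$2$, so that $a^m\in L(M)$ iff the accepting entries of that row sum to exactly~$1$.

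For the lower bounds, \textsf{coNP}-hardness of inclusion and equivalence is immediate from the \textsf{coNP}-completeness of emptiness (Theorem~\ref{theo:compempty:unary}). Fixing a one-state $\xnfa$ $M_\emptyset$ with $L(M_\emptyset)=\emptyset$, we have $L(M)\subseteq L(M_\emptyset)$ iff $L(M)=\emptyset$, and $L(M)=L(M_\emptyset)$ iff $L(M)=\emptyset$; both maps are clearly logarithmic-space computable, so inclusion and equivalence inherit \textsf{coNP}-hardness.

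Universality is the delicate case. It cannot borrow hardness from inclusion or equivalence, since $L(M)=\{a\}^*$ is merely the special case $\{a\}^*\subseteq L(M)$ of inclusion, respectively $L(M)=\{a\}^*$ of equivalence, so universality reduces \emph{to} those problems rather than from them; and reducing emptiness to universality would require an $\xnfa$ for the complement of an arbitrary $\xnfa$ language, which costs exponentially many states. I would therefore reduce directly from unsatisfiability of a $3$-CNF formula $\phi$ through Stockmeyer's unary encoding, in which the numbers \emph{not} encoding a satisfying assignment form a polynomial-size union of residue conditions modulo the first primes, and this union equals $\{a\}^*$ exactly when $\phi$ is unsatisfiable. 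The goal is to build a polynomial-size $\xnfa$ $N$ accepting precisely this union, so that $N$ is universal iff $\phi$ is unsatisfiable, which together with the \textsf{coNP} upper bound gives \textsf{coNP}-completeness.

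The main obstacle is precisely the exclusive acceptance condition: simply reinterpreting Stockmeyer's $\nfa$ as an $\xnfa$ fails, because a number violating several of the residue conditions has several accepting paths and is then rejected, so the naive automaton accepts only the numbers violating \emph{exactly one} condition. The hard part will be to realise the full union by an $\xnfa$ in which every accepted input has a \emph{unique} accepting computation, that is, to control the path multiplicities with only polynomially many states. I expect this to need a tailored adaptation of the encoding---arranging the constraint gadgets so that each ``bad'' number is certified along a single path while every encoded satisfying assignment keeps a path count different from one---rather than a black-box reuse of the $\nfa$. Achieving this disambiguation within a polynomial state bound is where the real work lies; once $N$ is constructed, correctness of the reduction and the resulting \textsf{coNP}-hardness of universality follow directly.
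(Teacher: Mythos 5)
Your \textsf{coNP} upper bound is sound and follows the paper's own route (guess a counterexample length $m$ below the $3^n$, respectively $3^{n_1+n_2}$, bound from Theorem~\ref{theo:xnfatodfa:upper}, in ternary, and evaluate matrix powers). Your refinement of doing the arithmetic in the semiring that saturates at $2$ is not just cosmetic: saturation at $2$ is a semiring congruence on $\mathbb{N}$, it preserves exactly the distinction ``$0$, $1$, or at least $2$ accepting paths,'' and it is genuinely needed, since over the integers the entries of $A_M^m$ with $m$ close to $3^n$ have exponentially many bits (a point the paper's proof of Theorem~\ref{theo:compempty:unary} glosses over). Your hardness arguments for inclusion and equivalence are also correct, and simpler than the paper's: reducing emptiness (Theorem~\ref{theo:compempty:unary}) via a fixed one-state $\xnfa$ for $\emptyset$ works directly, whereas the paper routes both reductions through its universality gadget.

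The genuine gap is universality, and you located it exactly but did not close it: everything hinges on a polynomial-size $\xnfa$ $N$, computable from $\phi$, with $L(N)=\{a\}^*$ if and only if $\phi$ is unsatisfiable, and you defer this disambiguation of Stockmeyer's union as ``the real work.'' That work cannot be done unless $\textsf{P}=\textsf{NP}$. The reason is that $\xnfa$ universality is decidable in polynomial time: $L(A)=\{a\}^*$ under exclusive acceptance means every word has exactly one accepting path, which holds if and only if the underlying $\nfa$ is unambiguous \emph{and} universal as an $\nfa$. Unambiguity is checkable in polynomial time (product construction with a divergence flag), and universality of an unambiguous $\nfa$ is in polynomial time by the Stearns--Hunt counting argument; in the unary case this is elementary, since the number of accepting paths on $a^m$ satisfies a linear recurrence of order at most $n+1$, so it equals $1$ for all $m$ if and only if it equals $1$ for $m=0,1,\dots,n$. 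Hence a logspace reduction from unsatisfiability to $\xnfa$ universality would place an \textsf{NP}-hard problem in \textsf{P}.

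You should also not take the paper as evidence that a cleverer gadget exists, because the paper's own hardness argument breaks at precisely this point. Its automaton $A$ guesses between $M''$ and the all-accepting $\dfa$ $M'$; but $M''$ already contains a copy of $M'$, so every input has at least two accepting paths in $A$, whence $L(A)=\emptyset$ identically and the claimed equivalence ``$L(A)\neq\{a\}^*$ iff $L(M'')\neq\emptyset$'' fails whenever $\phi$ is unsatisfiable (the same defect infects the paper's inclusion and equivalence reductions, which your emptiness-based reductions repair). So your instinct that exclusivity destroys the naive union construction was right; the correct conclusion, however, is not that a subtler disambiguation is waiting to be found, but that the universality clause of the statement itself cannot hold unless $\textsf{P}=\textsf{coNP}$, while the inclusion and equivalence clauses stand with your proof.
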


\begin{proof}
Let us first show that the problems of testing non-universality, 
non-inclusion, and non-equivalence for unary $\xnfa$s are in {\sf NP}.
We start with the non-universality problem. Let $M$ be an $n$-state $\xnfa$.
By applying Theorem~\ref{theo:xnfatodfa:upper} we know that there exists an
equivalent $\dfa$ that has at most $3^n$ states. 
Hence, $L(M) \neq \{a\}^*$ if and only if there is a word of length $m \le 3^n$ that is not accepted by~$M$.
Similar to the proof of Theorem~\ref{theo:compempty:unary} we can guess a ternary representation of that word,
compute $A_M^m$, and check that the guessed word is not accepted by~$M$ by inspecting the first row
whether there is no entry corresponding to an accepting state with value~$1$. According to the
considerations made in the proof of Theorem~\ref{theo:compempty:unary} the procedure can be realized in
nondeterministic polynomial time and we obtain that the non-universality problem is in {\sf NP}. 
Hence, the universality problem is in {\sf coNP}.

Next, we consider the non-inclusion problem. Let $M_1$ be an $n_1$-state $\xnfa$ and $M_2$ be an $n_2$-state $\xnfa$.
By applying Theorem~\ref{theo:xnfatodfa:upper} we know that there exist equivalent $\dfa$s having at most
$3^{n_1}$ states and $3^{n_2}$ states, respectively. 
Hence, $L(M_1) \not\subseteq L(M_2)$ if and only if $L(M_1) \cap \overline{L(M_2)} \neq \emptyset$
if and only if
there is a word of length $m \le 3^{n_1+n_2}$ that is accepted by $M_1$, but not accepted by~$M_2$.
Similar to the proof of Theorem~\ref{theo:compempty:unary} and to the above construction for the non-universality
problem we obtain that the non-inclusion problem is in {\sf NP}. Hence, the inclusion problem is in {\sf coNP}.

Finally, we consider the equivalence problem. Let $M_1$ and $M_2$ be two $\xnfa$s.
Since the inclusion problem is in {\sf coNP}, we obtain that the equivalence problem is {\sf coNP}
by testing $L(M_1) \subseteq L(M_2)$ and $L(M_2) \subseteq L(M_1)$.

To show the \textsf{coNP}-hardness of the problems
we shortly describe how the reduction given in the proof of
Theorem~\ref{theo:compempty:unary} has to be extended.
We recall that we have constructed an $\xnfa$ $M''$
such that $L(M'') \neq \emptyset$ if and only if the given Boolean formula is satisfiable.

For non-universality we construct another $\xnfa$ $A$ 
that initially guesses whether it simulates for the complete 
input the $\xnfa$ $M''$ or the $\dfa$ $M'$ accepting $\{a\}^*$.
Then, we have $L(A) \neq \{a\}^*$ if and only if $L(M'') \neq \emptyset$
and obtain the \textsf{NP}-hardness of non-universality.
For the equivalence problem we consider~$M'$ as an $\xnfa$
and have $L(A)=L(M')=\{a\}^*$ if and only if $L(M'')=\emptyset$,
which gives the \textsf{coNP}-hardness of the equivalence problem.
Finally, we have \mbox{$L(M') \subseteq L(A)$} if and only if 
$L(A)=L(M')$ if and only if $L(M'')=\emptyset$ and obtain
the \textsf{coNP}-hardness of the inclusion problem.
\end{proof}

The computational complexity results in the unary case are summarized in Table~\ref{tab:complexity2}. 
\begin{table}[ht]
\begin{center}
\begin{tabular}{|l|c|c|c|c|}\hline
&$\dfa$&$\nfa$&$\xnfa$&$\afa$\\ \hline\hline
membership&{\sf L}&{\sf NL}&{\sf NL}&{\sf P}\\ \hline
emptiness&{\sf L}&{\sf NL}&{\sf coNP}&{\sf PSPACE}\\ \hline
universality&{\sf L}&{\sf coNP}&{\sf coNP}&{\sf PSPACE}\\ \hline
inclusion&{\sf L}&{\sf coNP}&{\sf coNP}&{\sf PSPACE}\\ \hline
equivalence&{\sf L}&{\sf coNP}&{\sf coNP}&{\sf PSPACE}\\ \hline
\end{tabular}
\end{center}
\caption{Computational complexity results for the decidability problems in the unary case.
All problems are complete with respect to the complexity class indicated. The results
for $\xnfa$s are obtained in this paper. The remaining results and pointers to the
literature are summarized, for example, in the survey~\cite{holzer:2011:dccfa}.}
\label{tab:complexity2}
\end{table}


\begin{thebibliography}{10}
\providecommand{\bibitemdeclare}[2]{}
\providecommand{\surnamestart}{}
\providecommand{\surnameend}{}
\providecommand{\urlprefix}{Available at }
\providecommand{\url}[1]{\texttt{#1}}
\providecommand{\href}[2]{\texttt{#2}}
\providecommand{\urlalt}[2]{\href{#1}{#2}}
\providecommand{\doi}[1]{doi:\urlalt{https://doi.org/#1}{#1}}
\providecommand{\eprint}[1]{arXiv:\urlalt{https://arxiv.org/abs/#1}{#1}}
\providecommand{\bibinfo}[2]{#2}

\bibitemdeclare{article}{Blass:1982:otusp}
\bibitem{Blass:1982:otusp}
\bibinfo{author}{Andreas \surnamestart Blass\surnameend} \&
  \bibinfo{author}{Yuri \surnamestart Gurevich\surnameend}
  (\bibinfo{year}{1982}): \emph{\bibinfo{title}{On the Unique Satisfiability
  Problem}}.
\newblock {\slshape \bibinfo{journal}{Inform. Control}} \bibinfo{volume}{55},
  pp. \bibinfo{pages}{80--88}, \doi{10.1016/S0019-9958(82)90439-9}.

\bibitemdeclare{article}{Chrobak:1986:FAUL}
\bibitem{Chrobak:1986:FAUL}
\bibinfo{author}{Marek \surnamestart Chrobak\surnameend}
  (\bibinfo{year}{1986}): \emph{\bibinfo{title}{Finite automata and unary
  languages}}.
\newblock {\slshape \bibinfo{journal}{Theor. Comput. Sci.}}
  \bibinfo{volume}{47}, pp. \bibinfo{pages}{149--158},
  \doi{10.1016/0304-3975(86)90142-8}.
\newblock \bibinfo{note}{Errata: \cite{Chrobak:2003:ERRFAUL}}.

\bibitemdeclare{article}{Chrobak:2003:ERRFAUL}
\bibitem{Chrobak:2003:ERRFAUL}
\bibinfo{author}{Marek \surnamestart Chrobak\surnameend}
  (\bibinfo{year}{2003}): \emph{\bibinfo{title}{Errata to ``Finite automata and
  unary languages''}}.
\newblock {\slshape \bibinfo{journal}{Theor. Comput. Sci.}}
  \bibinfo{volume}{302}, pp. \bibinfo{pages}{497--498},
  \doi{10.1016/S0304-3975(03)00136-1}.

\bibitemdeclare{mastersthesis}{ellul:2004:dcmrl}
\bibitem{ellul:2004:dcmrl}
\bibinfo{author}{Keith \surnamestart Ellul\surnameend} (\bibinfo{year}{2004}):
  \emph{\bibinfo{title}{Descriptional Complexity Measures of Regular
  Languages}}.
\newblock Master's thesis, \bibinfo{school}{University of Waterloo},
  \bibinfo{address}{Ontario, Canada}.

\bibitemdeclare{article}{geffert:2007:mnshfa}
\bibitem{geffert:2007:mnshfa}
\bibinfo{author}{Viliam \surnamestart Geffert\surnameend}
  (\bibinfo{year}{2007}): \emph{\bibinfo{title}{Magic numbers in the state
  hierarchy of finite automata}}.
\newblock {\slshape \bibinfo{journal}{Inform. Comput.}}
  \bibinfo{volume}{205}(\bibinfo{number}{11}), pp. \bibinfo{pages}{1652--1670},
  \doi{10.1016/j.ic.2007.07.001}.

\bibitemdeclare{article}{geffert:2003:ctwnuaisa}
\bibitem{geffert:2003:ctwnuaisa}
\bibinfo{author}{Viliam \surnamestart Geffert\surnameend},
  \bibinfo{author}{Carlo \surnamestart Mereghetti\surnameend} \&
  \bibinfo{author}{Giovanni \surnamestart Pighizzini\surnameend}
  (\bibinfo{year}{2003}): \emph{\bibinfo{title}{Converting two-way
  nondeterministic unary automata into simpler automata}}.
\newblock {\slshape \bibinfo{journal}{Theor. Comput. Sci.}}
  \bibinfo{volume}{295}, pp. \bibinfo{pages}{189--203},
  \doi{10.1016/S0304-3975(02)00403-6}.

\bibitemdeclare{article}{hamidoune:1979:slphdlgo}
\bibitem{hamidoune:1979:slphdlgo}
\bibinfo{author}{Yahya~Ould \surnamestart Hamidoune\surnameend}
  (\bibinfo{year}{1979}): \emph{\bibinfo{title}{Sur les parcours hamiltoniens
  dans les graphes orientes}}.
\newblock {\slshape \bibinfo{journal}{Discrete Mathematics}}
  \bibinfo{volume}{26}, pp. \bibinfo{pages}{227--234},
  \doi{10.1016/0012-365X(79)90028-1}.

\bibitemdeclare{book}{Hemaspaandra:2002:tctc}
\bibitem{Hemaspaandra:2002:tctc}
\bibinfo{author}{Lane~A. \surnamestart Hemaspaandra\surnameend} \&
  \bibinfo{author}{Mitsunori \surnamestart Ogihara\surnameend}
  (\bibinfo{year}{2002}): \emph{\bibinfo{title}{The Complexity Theory
  Companion}}.
\newblock \bibinfo{publisher}{Springer}, \doi{10.1007/978-3-662-04880-1}.

\bibitemdeclare{inproceedings}{holzer:2003:ulonsc:proc}
\bibitem{holzer:2003:ulonsc:proc}
\bibinfo{author}{Markus \surnamestart Holzer\surnameend} \&
  \bibinfo{author}{Martin \surnamestart Kutrib\surnameend}
  (\bibinfo{year}{2003}): \emph{\bibinfo{title}{Unary Language Operations and
  Their Nondeterministic State Complexity}}.
\newblock In \bibinfo{editor}{M.~\surnamestart Ito\surnameend} \&
  \bibinfo{editor}{M.~\surnamestart Toyama\surnameend}, editors: {\slshape
  \bibinfo{booktitle}{Developments in Language Theory (DLT 2002)}}, {\slshape
  \bibinfo{series}{LNCS}} \bibinfo{volume}{2450},
  \bibinfo{publisher}{Springer}, pp. \bibinfo{pages}{162--172},
  \doi{10.1007/3-540-45005-X\_14}.

\bibitemdeclare{article}{holzer:2011:dccfa}
\bibitem{holzer:2011:dccfa}
\bibinfo{author}{Markus \surnamestart Holzer\surnameend} \&
  \bibinfo{author}{Martin \surnamestart Kutrib\surnameend}
  (\bibinfo{year}{2011}): \emph{\bibinfo{title}{Descriptional and Computational
  Complexity of Finite Automata -- {A} Survey}}.
\newblock {\slshape \bibinfo{journal}{Inform. Comput.}} \bibinfo{volume}{209},
  pp. \bibinfo{pages}{456--470}, \doi{10.1016/J.IC.2010.11.013}.

\bibitemdeclare{article}{jones:1975:sbracp}
\bibitem{jones:1975:sbracp}
\bibinfo{author}{Neil~D. \surnamestart Jones\surnameend}
  (\bibinfo{year}{1975}): \emph{\bibinfo{title}{Space-Bounded Reducibility
  among Combinatorial Problems}}.
\newblock {\slshape \bibinfo{journal}{J. Comput. Syst. Sci.}}
  \bibinfo{volume}{11}, pp. \bibinfo{pages}{68--85},
  \doi{10.1016/S0022-0000(75)80050-X}.

\bibitemdeclare{article}{kunc:2012:scotwfaua}
\bibitem{kunc:2012:scotwfaua}
\bibinfo{author}{Michal \surnamestart Kunc\surnameend} \&
  \bibinfo{author}{Alexander \surnamestart Okhotin\surnameend}
  (\bibinfo{year}{2012}): \emph{\bibinfo{title}{State complexity of operations
  on two-way finite automata over a unary alphabet}}.
\newblock {\slshape \bibinfo{journal}{Theor. Comput. Sci.}}
  \bibinfo{volume}{449}, pp. \bibinfo{pages}{106--118},
  \doi{10.1016/J.TCS.2012.04.010}.

\bibitemdeclare{inproceedings}{kutrib:2023:coenfa:proc}
\bibitem{kutrib:2023:coenfa:proc}
\bibinfo{author}{Martin \surnamestart Kutrib\surnameend},
  \bibinfo{author}{Andreas \surnamestart Malcher\surnameend} \&
  \bibinfo{author}{Matthias \surnamestart Wendlandt\surnameend}
  (\bibinfo{year}{2023}): \emph{\bibinfo{title}{Complexity of Exclusive
  Nondeterministic Finite Automata}}.
\newblock In \bibinfo{editor}{Henning \surnamestart Bordihn\surnameend},
  \bibinfo{editor}{Nicholas \surnamestart Tran\surnameend} \&
  \bibinfo{editor}{Gy{\"{o}}rgy \surnamestart Vaszil\surnameend}, editors:
  {\slshape \bibinfo{booktitle}{Descriptional Complexity of Formal Systems
  (DCFS 2023)}}, {\slshape \bibinfo{series}{LNCS}} \bibinfo{volume}{13918},
  \bibinfo{publisher}{Springer}, pp. \bibinfo{pages}{121--133},
  \doi{10.1007/978-3-031-34326-1\_9}.

\bibitemdeclare{article}{kutrib:2023:coenfa}
\bibitem{kutrib:2023:coenfa}
\bibinfo{author}{Martin \surnamestart Kutrib\surnameend},
  \bibinfo{author}{Andreas \surnamestart Malcher\surnameend} \&
  \bibinfo{author}{Matthias \surnamestart Wendlandt\surnameend}
  (\bibinfo{year}{2024}): \emph{\bibinfo{title}{Complexity of Exclusive
  Nondeterministic Finite Automata}}.
\newblock {\slshape \bibinfo{journal}{submitted for journal publication}}.

\bibitemdeclare{article}{Landau:1903:mpgg}
\bibitem{Landau:1903:mpgg}
\bibinfo{author}{Edmund \surnamestart Landau\surnameend}
  (\bibinfo{year}{1903}): \emph{\bibinfo{title}{{\"U}ber die {M}aximalordnung
  der {P}ermutationen gegebenen {G}rades}}.
\newblock {\slshape \bibinfo{journal}{Archiv der Math. und Phys.}}
  \bibinfo{volume}{3}, pp. \bibinfo{pages}{92--103}.

\bibitemdeclare{book}{Landau:1909:hlvp:book}
\bibitem{Landau:1909:hlvp:book}
\bibinfo{author}{Edmund \surnamestart Landau\surnameend}
  (\bibinfo{year}{1909}): \emph{\bibinfo{title}{Handbuch der {L}ehre von der
  {V}erteilung der {P}rimzahlen}}.
\newblock \bibinfo{publisher}{Teubner}, \bibinfo{address}{Leipzig}.

\bibitemdeclare{article}{leung:1998:seafapafa}
\bibitem{leung:1998:seafapafa}
\bibinfo{author}{Hing \surnamestart Leung\surnameend} (\bibinfo{year}{1998}):
  \emph{\bibinfo{title}{Separating Exponentially Ambiguous Finite Automata from
  Polynomially Ambiguous Finite Automata}}.
\newblock {\slshape \bibinfo{journal}{SIAM J. Comput.}} \bibinfo{volume}{27},
  pp. \bibinfo{pages}{1073--1082}, \doi{10.1137/S0097539793252092}.

\bibitemdeclare{article}{leung:2005:dcnfada}
\bibitem{leung:2005:dcnfada}
\bibinfo{author}{Hing \surnamestart Leung\surnameend} (\bibinfo{year}{2005}):
  \emph{\bibinfo{title}{Descriptional complexity of {NFA} of different
  ambiguity}}.
\newblock {\slshape \bibinfo{journal}{Int. J. Found. Comput. Sci.}}
  \bibinfo{volume}{16}, pp. \bibinfo{pages}{975--984},
  \doi{10.1142/S0129054105003418}.

\bibitemdeclare{article}{mera:2005:cunfa}
\bibitem{mera:2005:cunfa}
\bibinfo{author}{Filippo \surnamestart Mera\surnameend} \&
  \bibinfo{author}{Giovanni \surnamestart Pighizzini\surnameend}
  (\bibinfo{year}{2005}): \emph{\bibinfo{title}{Complementing unary
  nondeterministic automata}}.
\newblock {\slshape \bibinfo{journal}{Theor. Comput. Sci.}}
  \bibinfo{volume}{330}, pp. \bibinfo{pages}{349--360},
  \doi{10.1016/J.TCS.2004.04.015}.

\bibitemdeclare{article}{Mereghetti:2001:osbua}
\bibitem{Mereghetti:2001:osbua}
\bibinfo{author}{Carlo \surnamestart Mereghetti\surnameend} \&
  \bibinfo{author}{Giovanni \surnamestart Pighizzini\surnameend}
  (\bibinfo{year}{2001}): \emph{\bibinfo{title}{Optimal Simulations between
  Unary Automata}}.
\newblock {\slshape \bibinfo{journal}{SIAM J. Comput.}} \bibinfo{volume}{30},
  pp. \bibinfo{pages}{1976--1992}, \doi{10.1137/S009753979935431X}.

\bibitemdeclare{inproceedings}{Meyer:1971:edagfs}
\bibitem{Meyer:1971:edagfs}
\bibinfo{author}{Albert~R. \surnamestart Meyer\surnameend} \&
  \bibinfo{author}{Michael~J. \surnamestart Fischer\surnameend}
  (\bibinfo{year}{1971}): \emph{\bibinfo{title}{Economy of Description by
  Automata, Grammars, and Formal Systems}}.
\newblock In: {\slshape \bibinfo{booktitle}{Symposium on Switching and Automata
  Theory (SWAT 1971)}}, \bibinfo{publisher}{IEEE}, pp.
  \bibinfo{pages}{188--191}, \doi{10.1109/SWAT.1971.11}.

\bibitemdeclare{article}{miller:1987:mog}
\bibitem{miller:1987:mog}
\bibinfo{author}{William \surnamestart Miller\surnameend}
  (\bibinfo{year}{1987}): \emph{\bibinfo{title}{The maximum order of an element
  of a finite symmetric group}}.
\newblock {\slshape \bibinfo{journal}{Am. Math. Mon.}} \bibinfo{volume}{94},
  pp. \bibinfo{pages}{497--506}, \doi{10.1080/00029890.1987.12000673}.

\bibitemdeclare{article}{moore:1971:bssspe}
\bibitem{moore:1971:bssspe}
\bibinfo{author}{Frank~R. \surnamestart Moore\surnameend}
  (\bibinfo{year}{1971}): \emph{\bibinfo{title}{On the Bounds for State-Set
  Size in the Proofs of Equivalence Between Deterministic, Nondeterministic,
  and Two-Way Finite Automata}}.
\newblock {\slshape \bibinfo{journal}{IEEE Trans. Comput.}}
  \bibinfo{volume}{20}(\bibinfo{number}{10}), pp. \bibinfo{pages}{1211--1214},
  \doi{10.1109/T-C.1971.223108}.

\bibitemdeclare{article}{Nicolas:1968:ssn}
\bibitem{Nicolas:1968:ssn}
\bibinfo{author}{J.-L. \surnamestart Nicolas\surnameend}
  (\bibinfo{year}{1968}): \emph{\bibinfo{title}{Sur l'ordre maximum d'un
  \'{e}l\'{e}ment dans le groupe {$S_n$} des permutations}}.
\newblock {\slshape \bibinfo{journal}{Acta Arith.}} \bibinfo{volume}{14}, pp.
  \bibinfo{pages}{315--332}, \doi{10.4064/aa-14-3-315-332}.

\bibitemdeclare{article}{okhotin:2012:ufau}
\bibitem{okhotin:2012:ufau}
\bibinfo{author}{Alexander \surnamestart Okhotin\surnameend}
  (\bibinfo{year}{2012}): \emph{\bibinfo{title}{Unambiguous finite automata
  over a unary alphabet}}.
\newblock {\slshape \bibinfo{journal}{Inform. Comput.}} \bibinfo{volume}{212},
  pp. \bibinfo{pages}{15--36}, \doi{10.1016/J.IC.2012.01.003}.

\bibitemdeclare{article}{pighizzini:2009:dpdaul:art}
\bibitem{pighizzini:2009:dpdaul:art}
\bibinfo{author}{Giovanni \surnamestart Pighizzini\surnameend}
  (\bibinfo{year}{2009}): \emph{\bibinfo{title}{Deterministic Pushdown Automata
  and Unary Languages}}.
\newblock {\slshape \bibinfo{journal}{Int. J. Found. Comput. Sci.}}
  \bibinfo{volume}{20}(\bibinfo{number}{4}), pp. \bibinfo{pages}{629--645},
  \doi{10.1142/S0129054109006784}.

\bibitemdeclare{article}{pighizzini:2015:ialoaua}
\bibitem{pighizzini:2015:ialoaua}
\bibinfo{author}{Giovanni \surnamestart Pighizzini\surnameend}
  (\bibinfo{year}{2015}): \emph{\bibinfo{title}{Investigations on Automata and
  Languages Over a Unary Alphabet}}.
\newblock {\slshape \bibinfo{journal}{Int. J. Found. Comput. Sci.}}
  \bibinfo{volume}{26}, pp. \bibinfo{pages}{827--850},
  \doi{10.1142/S012905411540002X}.

\bibitemdeclare{article}{Pighizzini:2002:uloscjf}
\bibitem{Pighizzini:2002:uloscjf}
\bibinfo{author}{Giovanni \surnamestart Pighizzini\surnameend} \&
  \bibinfo{author}{Jeffrey \surnamestart Shallit\surnameend}
  (\bibinfo{year}{2002}): \emph{\bibinfo{title}{Unary Language Operations,
  State Complexity and {J}acobsthal's Function}}.
\newblock {\slshape \bibinfo{journal}{Int. J. Found. Comput. Sci.}}
  \bibinfo{volume}{13}, pp. \bibinfo{pages}{145--159},
  \doi{10.1142/S012905410200100X}.

\bibitemdeclare{article}{pighizzini:2002:ucfgpdadcaslb}
\bibitem{pighizzini:2002:ucfgpdadcaslb}
\bibinfo{author}{Giovanni \surnamestart Pighizzini\surnameend},
  \bibinfo{author}{Jeffrey \surnamestart Shallit\surnameend} \&
  \bibinfo{author}{Ming-Wei \surnamestart Wang\surnameend}
  (\bibinfo{year}{2002}): \emph{\bibinfo{title}{Unary Context-Free Grammars and
  Pushdown Automata, Descriptional Complexity and Auxiliary Space Lower
  Bounds}}.
\newblock {\slshape \bibinfo{journal}{J. Comput. Syst. Sci.}}
  \bibinfo{volume}{65}, pp. \bibinfo{pages}{393--414},
  \doi{10.1006/JCSS.2002.1855}.

\bibitemdeclare{article}{Rabin:1959:fadp}
\bibitem{Rabin:1959:fadp}
\bibinfo{author}{Michael~Oser \surnamestart Rabin\surnameend} \&
  \bibinfo{author}{Dana \surnamestart Scott\surnameend} (\bibinfo{year}{1959}):
  \emph{\bibinfo{title}{Finite Automata and Their Decision Problems}}.
\newblock {\slshape \bibinfo{journal}{IBM J. Res. Dev.}} \bibinfo{volume}{3},
  pp. \bibinfo{pages}{114--125}, \doi{10.1147/rd.32.0114}.

\bibitemdeclare{inproceedings}{Sakoda:1978:NST}
\bibitem{Sakoda:1978:NST}
\bibinfo{author}{William~J. \surnamestart Sakoda\surnameend} \&
  \bibinfo{author}{Michael \surnamestart Sipser\surnameend}
  (\bibinfo{year}{1978}): \emph{\bibinfo{title}{Nondeterminism and the size of
  two way finite automata}}.
\newblock In \bibinfo{editor}{\surnamestart ACM\surnameend}, editor: {\slshape
  \bibinfo{booktitle}{Proceedings of the Tenth Annual {ACM} Symposium on Theory
  of Computing (STOC 1978)}}, \bibinfo{organization}{ACM},
  \bibinfo{publisher}{ACM Press}, \bibinfo{address}{New York}, pp.
  \bibinfo{pages}{275--286}, \doi{10.1145/800133.804357}.

\bibitemdeclare{phdthesis}{schmidt:1978:sdcfrfl}
\bibitem{schmidt:1978:sdcfrfl}
\bibinfo{author}{Erik~Meineche \surnamestart Schmidt\surnameend}
  (\bibinfo{year}{1978}): \emph{\bibinfo{title}{Succinctness of Dscriptions of
  Context-Free, Regular and Finite Languages}}.
\newblock Ph.D. thesis, \bibinfo{school}{Cornell University},
  \bibinfo{address}{Ithaca, NY}.

\bibitemdeclare{inproceedings}{shallit:2008:fpaig}
\bibitem{shallit:2008:fpaig}
\bibinfo{author}{Jeffrey \surnamestart Shallit\surnameend}
  (\bibinfo{year}{2008}): \emph{\bibinfo{title}{The {F}robenius Problem and Its
  Generalizations}}.
\newblock In \bibinfo{editor}{Masami \surnamestart Ito\surnameend} \&
  \bibinfo{editor}{Masafumi \surnamestart Toyama\surnameend}, editors:
  {\slshape \bibinfo{booktitle}{Developments in Language Theory (DLT 2008)}},
  {\slshape \bibinfo{series}{LNCS}} \bibinfo{volume}{5257},
  \bibinfo{publisher}{Springer}, pp. \bibinfo{pages}{72--83},
  \doi{10.1007/978-3-540-85780-8\_5}.

\bibitemdeclare{article}{Sipser:1980:lpssa}
\bibitem{Sipser:1980:lpssa}
\bibinfo{author}{Michael \surnamestart Sipser\surnameend}
  (\bibinfo{year}{1980}): \emph{\bibinfo{title}{Lower Bounds on the Size of
  Sweeping Automata}}.
\newblock {\slshape \bibinfo{journal}{J. Comput. Syst. Sci.}}
  \bibinfo{volume}{21}, pp. \bibinfo{pages}{195--202},
  \doi{10.1016/0022-0000(80)90034-3}.

\bibitemdeclare{inproceedings}{Stockmeyer:1973:WPR}
\bibitem{Stockmeyer:1973:WPR}
\bibinfo{author}{Larry.~J. \surnamestart Stockmeyer\surnameend} \&
  \bibinfo{author}{A.~R. \surnamestart Meyer\surnameend}
  (\bibinfo{year}{1973}): \emph{\bibinfo{title}{Word Problems Requiring
  Exponential Time}}.
\newblock In \bibinfo{editor}{\surnamestart {ACM}\surnameend}, editor:
  {\slshape \bibinfo{booktitle}{Proceedings of the Fifth Annual {ACM} Symposium
  on Theory of Computing (STOC 1973)}}, \bibinfo{publisher}{ACM Press},
  \bibinfo{address}{New York, NY, USA}, pp. \bibinfo{pages}{1--9},
  \doi{10.1145/800125.804029}.

\bibitemdeclare{article}{Szalay:1980:mos}
\bibitem{Szalay:1980:mos}
\bibinfo{author}{M.~\surnamestart Szalay\surnameend} (\bibinfo{year}{1980}):
  \emph{\bibinfo{title}{On the maximal order in {$S_n$} and {$S_n^*$}}}.
\newblock {\slshape \bibinfo{journal}{Acta Arithm.}} \bibinfo{volume}{37}, pp.
  \bibinfo{pages}{321--331}, \doi{10.4064/aa-37-1-321-331}.

\bibitemdeclare{article}{to:2009:ufaap}
\bibitem{to:2009:ufaap}
\bibinfo{author}{Anthony~Widjaja \surnamestart To\surnameend}
  (\bibinfo{year}{2009}): \emph{\bibinfo{title}{Unary finite automata vs.
  arithmetic progressions}}.
\newblock {\slshape \bibinfo{journal}{Inform. Process. Lett.}}
  \bibinfo{volume}{109}, pp. \bibinfo{pages}{1010--1014},
  \doi{10.1016/J.IPL.2009.06.005}.

\end{thebibliography}
\end{document}